\newcommand\vldbdoi{XX.XX/XXX.XX}
\newcommand\vldbpages{XXX-XXX}
\newcommand\vldbvolume{14}
\newcommand\vldbissue{1}
\newcommand\vldbyear{2020}
\newcommand\vldbauthors{\authors}
\newcommand\vldbtitle{\shorttitle} 
\newcommand\vldbavailabilityurl{https://github.com/johnmzz/temproal_core_CC}
\newcommand\vldbpagestyle{plain}
\setlist[itemize]{nosep,label=\textbf{-},leftmargin=*}
\setlist[enumerate]{nosep,label=\textit{(\arabic*)},leftmargin=*}
\newcommand{\kw}[1]{{\ensuremath {\mathsf{#1}}}\xspace}
\newcommand{\reffig}[1]{Figure~\ref{fig:#1}}
\newcommand{\refsubsec}[1]{Section~\ref{subsec:#1}}
\newcommand{\reftab}[1]{Table~\ref{tab:#1}}
\newcommand{\refalg}[1]{Algorithm~\ref{alg:#1}}
\newcommand{\refdef}[1]{Definition~\ref{def:#1}}
\newcommand{\reflem}[1]{Lemma~\ref{lem:#1}}
\newcommand{\stitle}[1]{\noindent{\bf #1}}
\newcommand{\kwnull}{\ensuremath{\kw{Null}}}
\newcommand{\kwcontinue}{\textbf{continue}}
\newcommand{\algquery}{\kw{Query}}
\newcommand{\idx}{PECB-Index\xspace}
\newcommand{\algmerge}{\kw{Merge}}
\newcommand{\binforest}{\mathcal{B}}
\newcommand{\la}{\langle}
\newcommand{\ra}{\rangle}
\newcommand{\eeq}{EC\xspace}
\newcommand{\ct}{\ensuremath{\mathcal{CT}}}
\newcommand{\msf}{\mathcal{M}}
\newcommand{\algfind}{\kw{findInsertion}}
\newcommand{\algconstruct}{\kw{\binforest}-Construct}
\newcommand{\tcore}{\mathcal{T}^k_{[ts,te]}}
\begin{document}
\title{Accelerating Historical K-Core Search in Temporal Graphs}

\author{Zhuo Ma}
\affiliation{%
  \institution{The University of New South Wales}
  \city{Sydney}
  \country{Australia}
}
\email{zhuo.ma@student.unsw.edu.au}

\author{Dong Wen}
\affiliation{%
  \institution{The University of New South Wales}
  \city{Sydney}
  \country{Australia}
}
\email{dong.wen@unsw.edu.au}

\author{Kaiyu Chen}
\affiliation{%
  \institution{The University of New South Wales}
  \city{Sydney}
  \country{Australia}
}
\email{kaiyu.chen1@unsw.edu.au}

\author{Yixiang Fang}
\affiliation{%
  \institution{The Chinese University of Hong Kong}
  \city{Shenzhen}
  \country{China}
}
\email{fangyixiang@cuhk.edu.cn}

\author{Xuemin Lin}
\affiliation{%
  \institution{Shanghai Jiao Tong University}
  \city{Shanghai}
  \country{China}
}
\email{xuemin.lin@sjtu.edu.cn}

\author{Wenjie Zhang}
\affiliation{%
  \institution{The University of New South Wales}
  \city{Sydney}
  \country{Australia}
}
\email{wenjie.zhang@unsw.edu.au}

\begin{abstract}
We study the temporal $k$-core component search (TCCS), which outputs the $k$-core containing the query vertex in the snapshot over an arbitrary query time window in a temporal graph. The problem has been shown critical for tasks such as contact tracing, fault diagnosis, and financial forensics. The state-of-the-art EF-Index designs a separated forest structure for a set of carefully selected windows, incurring quadratic preprocessing time and large redundant storage. Our method introduces the ECB-forest, a compact edge-centric binary forest that captures $k$-core of any arbitrary query vertex over time. In this way, query can be processed by searching a connected component in the forest. We develop an efficient algorithm for index construction. Experiments on real-world temporal graphs show that our method significantly improves the index size and construction cost (up to 100x faster on average) while maintaining the high query efficiency.
\end{abstract}

\maketitle

\pagestyle{\vldbpagestyle}
\begingroup\small\noindent\raggedright\textbf{PVLDB Reference Format:}\\
\vldbauthors. \vldbtitle. PVLDB, \vldbvolume(\vldbissue): \vldbpages, \vldbyear.\\
\href{https://doi.org/\vldbdoi}{doi:\vldbdoi}
\endgroup
\begingroup
\renewcommand\thefootnote{}\footnote{\noindent
This work is licensed under the Creative Commons BY-NC-ND 4.0 International License. Visit \url{https://creativecommons.org/licenses/by-nc-nd/4.0/} to view a copy of this license. For any use beyond those covered by this license, obtain permission by emailing \href{mailto:info@vldb.org}{info@vldb.org}. Copyright is held by the owner/author(s). Publication rights licensed to the VLDB Endowment. \\
\raggedright Proceedings of the VLDB Endowment, Vol. \vldbvolume, No. \vldbissue\ %
ISSN 2150-8097. \\
\href{https://doi.org/\vldbdoi}{doi:\vldbdoi} \\
}\addtocounter{footnote}{-1}\endgroup

\ifdefempty{\vldbavailabilityurl}{}{
\vspace{.3cm}
\begingroup\small\noindent\raggedright\textbf{PVLDB Artifact Availability:}\\
The source code, data, and/or other artifacts have been made available at \url{\vldbavailabilityurl}.
\endgroup
}



\section{Introduction}
\label{sec:intro}
Temporal graphs are widely used to model evolving interactions between entities, where each edge is associated with a timestamp indicating the time of interaction. These graphs arise in real-world systems, such as infrastructural networks, communication networks, and biological networks \cite{holme2012temporal}. Figure \ref{fig:graph} illustrates an example of a temporal graph, where each edge $(u, v, t)$ represents an interaction between vertices $u$ and $v$ at time $t$.

\begin{figure}[htbp]
    \centering
    \begin{subfigure}{0.49\columnwidth}
        \centering
        \includegraphics[width=\linewidth]{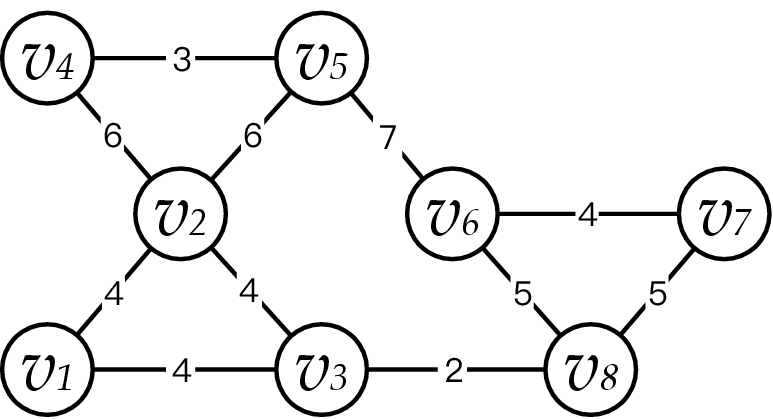}
        \caption{$G$}
        \label{fig:graph}
    \end{subfigure}
    \begin{subfigure}{0.49\columnwidth}
        \centering
        \includegraphics[width=\linewidth]{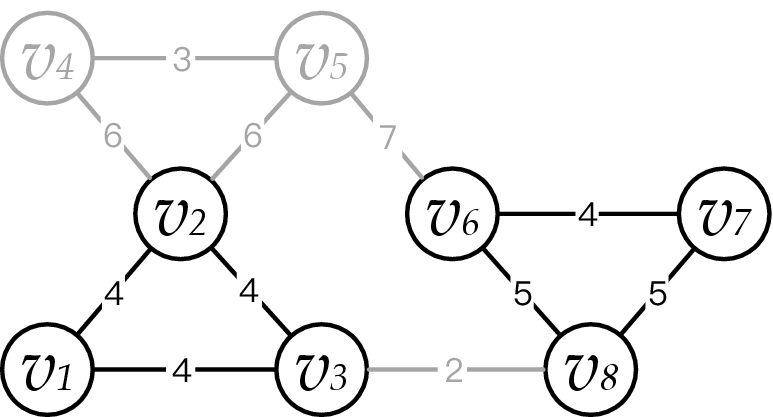}
        \caption{$G_{[4,5]}$}
        \label{fig:projected}
    \end{subfigure}
    \vspace{-0.5em}
    \caption{A temporal graph $G$ and its projected graph $G_{[4,5]}$.}
    \label{fig:temporal_graph}
    
\end{figure}

The $k$-cores \cite{Seidman1983} of a graph are the maximal induced connected subgraphs in which every vertex has degree at least $k$ within the subgraph. Thanks to its simplicity and effectiveness at capturing cohesion, the $k$-core has attracted significant research attention due to its wide range of applications, including community detection, network visualization, and system structure analysis \cite{Cui2014, Li2018, Zhang2010, Cheng2011}. Detecting $k$-cores in temporal graphs supports applications such as uncovering suspicious account clusters during anti–money laundering investigations over specified time intervals \cite{Chu2019, lin2024a}.

\stitle{Temporal $k$-core component search (TCCS).}
Given any time window in a temporal graph, a snapshot is induced by all edges falling in the window. Consequently, the $k$-core structure varies across snapshots of different time windows. Great attention has been drawn in recent studies to efficiently retrieve results for an arbitrary window.
Yu et al. \cite{DBLP:journals/pvldb/YuWQ00021} proposed an index to identify whether a query vertex is contained in the $k$-core of a query time window. Yang et al. \cite{yang2023} enumerated all distinct $k$-cores in all sub-windows of the query time window. Both works focus on global analysis and lack support for efficiently retrieving the component containing a query vertex, which is crucial for monitoring a user’s local community \cite{DBLP:conf/kdd/TangLZN08} or delineating a fault region \cite{DBLP:journals/ress/LevitinXY14}. To address this, Yang et al. \cite{yang2024} investigated the $k$-core search problem (namely TCCS for short), which identifies all vertices that are connected to the query vertex in the $k$-core of the querying window.

\stitle{Applications.}
By unifying cohesiveness and reachability in a time window, TCCS serves as a versatile primitive in several time-critical domains and accelerates diverse analyses while preserving the exact structural context in which critical events unfold.

\begin{itemize}
    \item \textit{Financial anomaly detection.} In financial crime analysis, transaction networks are scanned for dense clusters that flare up around major events; by returning only the $k$-core component that contains a black-listed account, TCCS cuts the number of accounts an investigator must review from hundreds to a handful, streamlining anti-money-laundering workflows \cite{yang2023}.
    
    \item \textit{Social Network Analysis.} On social platforms, users that persist in temporal cores exhibit a strong propensity to spread misinformation. Isolating their surrounding component helps moderators focus on the most influential part of a rumour cascade \cite{momin2023}.
    
    \item \textit{Epidemiology and Security.} In contact or communication networks, given a confirmed infection or a suspect entity, TCCS can identify all individuals potentially exposed or connected within a time window. This supports contact tracing \cite{mao2021digital}, exposure risk assessment \cite{DBLP:journals/pacmmod/XieFXLM23}, and criminal associate detection \cite{DBLP:conf/www/AnyanwuS03}. 
\end{itemize}

\stitle{The STOA.} Yang et al. \cite{yang2024} proposed an index-based solution for the TCCS problem. The proposed index, namely EF-Index, leverages Temporal $k$-Core Evolution, modeling structural changes in $k$-cores over time windows $[ts, te]$ in temporal graphs. Two thresholds define these transitions: the Tightest Time Interval (TTI), the smallest window where a $k$-core’s topology remains identical, and the Loosest Time Interval (LTI), the broadest such window. Expanding the LTI or shrinking the TTI triggers evolution into a new subgraph. The Core Lineage captures evolutionary sequences where each $k$-core evolves from its predecessor as the time window expands, forming a Lineage Graph—a directed acyclic graph with nodes as distinct $k$-cores and edges as lineage links. The EF-Index stores connectivity in Minimum Temporal Spanning Forests (MTSFs), incrementally tracking vertex connectivity across chains. For TCCS queries, EF-Index uses a lookup structure to map time windows to the relevant $k$-core and MTSF, followed by a label-constrained depth-first search (DFS) on the MTSF, achieving query complexity $O(d*log(p_{max}) + \sum_v deg(v))$, where $d$ is the depth of the lineage graph and $p_{max}$ is the maximum number of LTIs of a temporal $k$-core. The EF-Index faces several limitations:

\begin{itemize}
    \item High Computational Cost: OTCD computes all temporal $k$-cores, including redundant ones with identical components, yielding a costly $O(t_{\max}^2 \cdot V_k)$ complexity, where $V_k$ is the average number of vertices for all temporal $k$-cores.

    \item Storage Redundancy: MTSFs are stored for each chain, despite overlapping components across $k$-cores, inflating space usage.

\end{itemize}


\stitle{Our Approach}  
The drawbacks of the EF-Index motivates us to develop a more efficient approach to reduce redundancy and overhead while preserving query performance.
Our basic idea is to reduce the TCCS to a subproblem called start-anchored TCCS, where the start time $ts$ is fixed, and queries allow arbitrary end times $te \ge ts$. 

Given the anchored start time, we utilize a concept for any edge $e$ called core time, which is the earliest end time such that $e$ is in the $k$-core of the time window.
We observe that the minimum spanning forest (MSF) of the temporal graph by the core time of each edge as weight is equivalent to the original graph in terms of the start-anchored TCCS problem. By equivalence, we mean the result can be derived by collecting the connected component of the query vertex in the window of the anchored start time and an arbitrary end time in the forest.
However, simply indexing as MSF introduces much space redundancy when storing the structure for different start times. The large index also increases the search space and slows down query performance. To address this, we design a new structure called the ECB-forest (Edge-centric, Connectivity-equivalent, Binary forest), where each node corresponds to a temporal edge in MSF and maintains at most two children. Similar as the MSF, the structure is proved still equivalent to the graph, and the size is bounded by $O(n)$, where $n$ is the number of vertices in the graph.
The binary property reduces redundant traversal paths and improves query efficiency without compromising the equivalence property. 

We organize all ECB-forests across start times into a unified index, denoted \idx. Each edge maintains a list of versioned entries, each storing the parent and children at a specific $ts$. For the whole index (for arbitrary windows), we only store neighbors for a forest node at a start time if they are different at the previous start time. In this way, neighbors of a node in the ECB-forest for any start time can be derived by a binary search.

To answer a query $(u, [ts, te])$, we locate all temporal edges involving $u$ within the temporal $k$-core of $[ts, te]$, and use them as entry points in the ECB-forest at $ts$. The neighbors of each forest node at corresponding start time can be derived by a binary search. The search explores both upward and downward directions, following parent and child links whose core times fall within the query window, and collects all connected edges—and hence, all connected vertices.

We construct the index in an incremental way by iterating the start time. Starting from an empty forest, each new edge is integrated into the ECB-forest to get the result for the next start time. For adding each edge (forest node) $e$ to the forest, we first carefully pick connect $e$ to certain existing nodes to produce a graph (a cycle may exist) that are equivalent to the graph in terms of query processing. Then we develop a set of constant-time transformation operators to eliminate the cycle and return a valid ECB-Forest. As a result, the running time for processing each edge is bounded by the depth of the forest.

\stitle{Contributions.}
We summarize our main contributions as follows:
\begin{itemize}
    \item We propose a novel edge-centric binary forest structure for the TCCS problem. The structure enables efficient $k$-core search for arbitrary vertices and windows.
    \item We propose an efficient algorithm for index construction. Our algorithm bound the cost to add each edge to the forest by the depth of the forest (i.e., the longest shortest path distance for any forest node to the root).
    \item We conduct extensive experiments on real-world temporal graphs. The results demonstrate our significant improvement of efficiency and space usage compared with the state-of-the-art solution.
\end{itemize}

\section{Preliminary}
\label{sec:pre}

We consider an undirected temporal graph $G(V,E)$, where each edge $(u,v,t) \in E$ is associated with a timestamp $t$, representing the interaction time between vertices $u$ and $v$. Let $n$ and $m$ denote the number of vertices and edges in $G$, respectively. Without loss of generality, we assume that edge timestamps form a continuous sequence of integers starting from 1. The maximum number of distinct timestamps in the graph is denoted by $t_{max}$. The timestamp of edge $e$ is denoted by $t(e)$. The degree of a vertex $u$ is represented as $deg(u)$. We use $E_t$ to denote the set of edges with timestamp $t$. The projected graph of $G$ over a time window $[ts,te]$, denoted as $G_{[ts,te]}$, is the subgraph consisting of all edges whose timestamps fall within $[ts,te]$.

\begin{definition}[$K$-Core \cite{Seidman1983}]
\label{def:k-core}
Given a simple graph $G$ and an integer $k$, a $k$-core of $G$ is a maximal induced connected subgraph of $G$ in which every vertex has at least $k$ neighbors.
\end{definition}


\begin{definition}[Temporal $K$-Core \cite{yang2023}]
\label{def:temporal-core}
Given a temporal graph $G$ and a time window $[ts,te]$, the temporal $k$-core of $[ts,te]$, denoted as $\tcore$, is the maximal subgraph of $G_{[ts,te]}$ in which every vertex has at least $k$ neighbors.
\end{definition}

We omit the superscript $k$ in $\tcore$ for simplicity when it is clear from the context.
Note that the connectivity property is ignored in the definition of temporal $k$-core, which exists in the initial definition \cite{Seidman1983}. For clearance, we follow the \refdef{temporal-core}, and we use the \textit{temporal $k$-core component} to represent a connected component in the temporal $k$-core. We present the research problem \cite{yang2024} as follows.

\stitle{Problem Statement.} Given a temporal graph $G$, integer $k$, a time window $[ts,te]$, and a vertex $u$, we aim to retrieve all vertices in the temporal $k$-core component containing $u$.


\vspace{-0.5em}
\begin{example}
\reffig{projected} shows the projected graph of $G$ induced by all edges whose timestamps lie in the interval $[4,5]$. In this window we retain exactly six vertices, namely $\{v_1, v_2, v_3, v_6, v_7, v_8\}$, and six edges, while all other edges fall outside the window. Since every vertex in each triangle has degree 2, there are two temporal 2‐core components of $\mathcal{T}_{[4,5]}$, namely $\{v_1,v_2,v_3\}$ and $\{v_6,v_7,v_8\}$.
\end{example}

\section{Existing Studies}
\label{sec:base}

\subsection{The State of the Art}

An index structure called Evolution Forest Index (EF-Index) is proposed in \cite{yang2024} Temporal $k$-Core Component Search (TCCS) queries. The EF-Index builds upon the concept of Temporal $k$-Core Evolution, which describes the dynamic changes in $k$-cores within temporal graphs as they evolve over varying time windows $[ts,te]$. A temporal $k$-core evolves by expanding or shrinking its structure as the time window varies, reflecting the evolution of the underlying projected graph $G_{[ts,te]}$. Two critical thresholds, the Loosest Time Interval (LTI) and the Tightest Time Interval (TTI), characterize these transitions. A TTI is defined as the smallest time window within which a temporal $k$-core remains topologically identical. Conversely, the LTI is defined as the broadest time window over which a temporal $k$-core retains its structure. Expanding the LTI or shrinking the TTI forces the temporal $k$-core to evolve into a new subgraph, capturing the temporal dynamics of the graph. Notably, each temporal $k$-core has a unique TTI but may have multiple LTIs.



The concept of Core Lineage was introduced to capture the evolutionary relationships between temporal $k$-cores. A core lineage represents a sequence of temporal $k$-cores where each core evolves into the next as the time window expands. Importantly, each temporal $k$-core in a lineage contains its predecessor, capturing a nested structure. However, this evolution is not strictly linear. A single $k$-core may evolve into multiple $k$-cores, and multiple $k$-cores may converge into one. These relationships are represented as a Lineage Graph, a directed acyclic graph where nodes correspond to distinct temporal $k$-cores and edges capture their lineage relationships.

The EF-Index is constructed based on the Lineage Graph, which provides a compact representation of the evolutionary relationships among temporal $k$-cores. The construction begins with computing all distinct $k$-cores for all possible time windows using the OTCD algorithm \cite{yang2023}. This step incurs significant computational costs, with a time complexity of $O(t_{max}^2 * V_k)$, where $V_k$ is the average number of vertices in the temporal $k$-cores. Once the $k$-cores are computed, the lineage relationships are identified using Corner Time Intervals (CTIs), derived from the LTIs and TTIs of each temporal $k$-core. These CTIs facilitate efficient construction of the Lineage Graph. To reduce redundancy, a lineage chain cover is generated using the Hopcroft-Karp algorithm. This cover minimizes the number of chains while ensuring that each $k$-core appears exactly once across the chains. Furthermore, only the largest $k$-cores at the tail of each lineage chain are preserved, significantly reducing storage requirements.

To compress connectivity information within temporal $k$-cores, the EF-Index employs Minimum Temporal Spanning Forests (MTSFs). An MTSF incrementally maintains vertex connectivity across lineage chains. Each edge in the MTSF is labeled with a valid time window $[ts,te]$, indicating when its endpoints are connected. MTSFs are constructed by comparing consecutive $k$-cores in a lineage chain and adding edges that connect previously disconnected vertices. This approach achieves a time complexity of $O(E_k + log(V_k))$ based on its implementation, where $E_k$ is the average number of edges in the temporal $k$-cores.

Finally, to process TCCS queries, the EF-Index employs a lookup structure that maps query time windows to the corresponding $k$-core and its lineage chain. Each entry in this lookup structure contains the $k$-core’s TTI as its identifier, a pointer to the corresponding MTSF, and a list of pointers to its lineage $k$-cores. Query processing involves two steps. First, the relevant MTSF is retrieved by traversing the lookup structure to find the $k$-core with the largest TTI containing the query time window. This retrieval step has a complexity of $O(d * log(p_{max}))$, where $d$ is the depth of the lineage graph and $p_{max}$ is the maximum number of LTIs of a temporal $k$-core. Second, a label-constrained depth-first search (DFS) is performed on the retrieved MTSF, restricted to edges whose labels are subintervals of the queried time window. The DFS step has a complexity of $O(\Sigma_{v} deg(v))$.

\stitle{Challenges.} 
While the EF-Index provides a robust framework for answering Temporal $k$-Core Component Search (TCCS) queries, it suffers from several notable limitations. First, the OTCD algorithm computes all temporal $k$-cores by treating each distinct subgraph (edge-set) as a separate result. However, different edge-sets may yield the same connected components, resulting in unnecessary computations and increased redundancy. The time complexity of OTCD, $O(t_{max}^2 * V_k)$, further highlights its computational overhead. Second, constructing the lineage graph and finding the optimal lineage cover incur additional costs proportional to the number of temporal $k$-cores and their average Corner Time Intervals (CTIs), adding further to the computational burden. Third, while the Minimum Temporal Spanning Forests (MTSFs) are stored incrementally, an MTSF is created for every $k$-core chain, even though the connected components of different $k$-cores in a chain may overlap. This redundancy leads to a bloated index size. 
Overall, the EF-Index’s construction process is computationally expensive, its storage requirements are excessive due to redundancy.

\vspace{-1em}
\subsection{Other Related Works}

Computing the $k$-core of a static graph is a well-established problem. The standard approach solves it in linear time $O(n + m)$ by iteratively removing vertices with the smallest degree~\cite{batagelj2003m, Jason2005}. To scale to massive graphs, optimized solutions have been developed under external~\cite{Cheng2011}, semi-external~\cite{khaouid2015k, Wen2015}, parallel~\cite{dhulipala2017julienne}, and distributed settings~\cite{montresor2011distributed, pechlivanidou2014mapreduce}. For dynamic graphs, where edges are incrementally inserted or deleted, several techniques enable efficient $k$-core maintenance without full recomputation~\cite{Sariyuce2013, Li2014, Zhang2016}. Further extensions adapt these methods to streaming and distributed environments using localized updates and vertex reordering strategies~\cite{aksu2014distributed, Wen2015}.

Research on cohesive patterns in temporal graphs has branched into many variants, each adding a different time-aware constraint. Historical $k$-cores capture the structure at designated snapshots \cite{DBLP:journals/pvldb/YuWQ00021}, whereas span-cores demand that every edge persist throughout a window \cite{Galimberti2018}. Other models impose persistence thresholds $(\pi, \tau)$ \cite{Li2018}, combine structural and temporal filters around a query vertex \cite{Li2021}, or weight edges by interaction frequency \cite{Ma2020}. Further extensions include the $(k, h)$-core, which requires each vertex to have at least $k$ neighbors with $h$ interactions \cite{Wu2015}; density-bursting subgraphs that maximize growth rate \cite{Chu2019}; periodic $k$-cores that recur regularly \cite{Qin2019,Qin2022}; and quasi-$(k, h)$-cores, which refine these ideas for efficient incremental maintenance \cite{Bai2020}. Together, these models deepen our understanding of cohesive substructures in evolving networks.


\section{Our Index Structure}
\label{sec:idx}

Designing an index-based algorithm for TCCS queries presents several key challenges. First, the index must preserve the coreness of vertices, ensuring that it accurately represents the sets of vertices belonging to the temporal $k$-core of any given time window $[ts,te]$. This involves capturing and storing the cohesiveness properties of the graph across varying temporal dimensions. Second, the index must also preserve the connectivity among these vertices within the specified time window, enabling efficient identification of connected components. Maintaining this dual representation of both coreness and connectivity is non-trivial, as it requires balancing comprehensiveness with storage efficiency. 


\stitle{Our Framework.}
\label{subsec:frk}
We mainly focus on the index for a specific $k$, and our technique can be naturally extended for any possible integer $k$. We start by considering a sub-problem as follows.

\begin{definition}[Start-Anchored Temporal K-Core Search]
\label{def:sub}
Given a temporal graph $G$, a predefined fixed start time $ts$, a vertex $u$, and an arbitrary query end time $te$, the start-anchored $k$-core search retrieves all vertices in the temporal $k$-core containing $u$ in the projected graph of $G$ over $[ts,te]$.
\end{definition}

We will develop a forest structure for the sub-problem where a fixed start time 
$ts$ are given. Given a query vertex and an arbitrary end time, the answer is obtained by traversing the forest instead of the whole graph. Each forest node stores pointers to its parent and children to support the search. 
Now, considering all possible start times (i.e., our research problem), we observe only a small proportion of nodes update their neighbors when comparing the forests for two adjacent start times. Leveraging this observation, we iterate start times in descending order and record a node’s neighborhood (in the forest) only when it differs from what was stored at the start time of the previous round. In other words, we compress the forests for all start times in the index, and a binary search is conducted to derive the neighbors of a node for a specific start time. 

\subsection{The Index by Anchoring Start Time}
\label{subsec:index}

In this subsection, we propose the index designed for start-anchored temporal $k$-core search. In addition to determining whether a vertex lies in the $k$-core of the query snapshot, the index must also preserve the connectivity among all vertices within that $k$-core.
Our key idea is to construct a compact structure that preserves, for every temporal $k$-core, exactly the same connectivity relationships as the original temporal $k$-core. The equivalence property is formally defined as follows.

\begin{definition}[\eeq-Equivalence]
Given a temporal graph $G$ and a start time $ts$, a set of temporal edges $F$ is \eeq-equivalent to the temporal $k$-core $\mathcal{T}$ w.r.t. $ts$ if for any end time $te$ ($ts \le te$), the connected components in the projected induced sub-forest of $F$ over $[ts,te]$ are the same as those of $\mathcal{T}$.
\end{definition}

We simply say $F$ is \eeq-equivalent to $\mathcal{T}$ when the start time is clear from the context. 
Based on the framework described in \ref{subsec:frk}, answering a window query $[ts,te]$ can be considered as searching the \eeq-equivalent structure built for the anchored start time $ts$, where the neighbors of each node in the forest are derived by a binary search. We set the following two optimization goals for high query efficiency:

\begin{itemize}
    \item Minimizing the search space in the forest of the anchored start time;
    \item Minimizing the difference between forests of adjacent start times.
\end{itemize}

\noindent
The first objective reduces the number of visited forest nodes in query processing, while the second objective reduces the cost of the binary search for the neighbors of each visited node. To meet both objectives, we begin by introducing the following concept.


\begin{definition}[Edge Core Time]
\label{def:ct}
Given a temporal graph $G$, a fixed start time $ts$, and an integer $k$, the core time of an edge $e (u, v, t)$, denoted as $\ct(e)_{ts}^k$, is defined as the earliest end time $te$ such that the edge $e$ belongs to the $k$-core in the time window $[ts, te]$. 
\end{definition}

We omit the superscript $k$ when it is clear from context. For a fixed start time $ts$, given an arbitrary end time $te$, all edges with core times not later than $te$ are in temporal $k$-cores of $[ts,te]$.

\begin{example}
Refer to \reffig{graph}, for $k = 2$ and the start time $ts = 4$, edge $e_1 = (v_1, v_2, 4)$ has a core time of 4, because the earliest end time at which it appears in the 2-core is $te = 4$. 
Likewise, for the edge $e_2 = (v_6, v_7, 4)$, we have $\ct(e_2)_{4} = 5$. For the time window [4,5], both $e_1$ and $e_2$ have core times no later than 5, therefore they are all in the temporal 2-core of $\mathcal{T}_{[4,5]}$. 
\end{example}

\begin{table}[t!]
  \centering
  \footnotesize
  \renewcommand{\arraystretch}{1.0}
  \setlength{\arrayrulewidth}{0.5pt}
  \resizebox{\columnwidth}{!}{%
    \begin{tabular}{r|l|r|l}
      \hline
      $(v_3,v_8,2)$ & $\langle1,5\rangle,\;\langle3, \infty \rangle$
        & $(v_6,v_8,5)$ & $\langle1,5\rangle,\;\langle5, \infty \rangle$ \\
      \hline
      $(v_4,v_5,3)$ & $\langle1,6\rangle,\;\langle4,\infty \rangle$
        & $(v_7,v_8,5)$ & $\langle1,5\rangle,\;\langle5,\infty \rangle$ \\
      \hline
      $(v_1,v_2,4)$ & $\langle1,4\rangle,\;\langle5,\infty \rangle$
        & $(v_2,v_4,6)$ & $\langle1,6\rangle,\;\langle4,\infty \rangle$ \\
      \hline
      $(v_1,v_3,4)$ & $\langle1,4\rangle,\;\langle5,\infty \rangle$
        & $(v_2,v_5,6)$ & $\langle1,6\rangle,\;\langle4,7\rangle,\;\langle5,\infty \rangle$ \\
      \hline
      $(v_2,v_3,4)$ & $\langle1,4\rangle,\;\langle5,\infty \rangle$
        & $(v_5,v_6,7)$ & $\langle1,7\rangle,\;\langle5,\infty \rangle$ \\
      \hline
      $(v_6,v_7,4)$ & $\langle1,5\rangle,\;\langle5,\infty \rangle$
        &                 &                      \\
      \hline
    \end{tabular}
  }
  \caption{The core times for $k=2$ of all edges of $G$ for all start times stored incrementally.}
  \label{tab:ect}
  \vspace{-0.5em}
\end{table}

\begin{example}
\reftab{ect} presents a compact, incremental encoding of every edge’s 2-core time across all start times. For each edge $e$, we store an ordered sequence of pairs $\la ts, \ct(e)_{ts}\ra$, but only emit a new pair when the core time changes or when the edge exits the $k$-core (in which case we set the core time to $\infty$). For example, consider $e = (v_2, v_5, 6)$. Its core time is 6 for $ts = 1, 2, 3$, so we record the first label $\la 1, 6\ra$. At $ts = 4$ the core time increases to 7, so we append $\la 4, 7 \ra$. Finally, at $ts = 5$, the edge leaves the 2-core, so we record, so we record $\la 5, \infty \ra$. Because core times never decrease as $ts$ grows, both the start-time and core-time sequences are monotonic.
\end{example}


Based on \refdef{ct}, a query can be answered by starting a traversal from the query vertex and following only those edges whose core times are no later than the query end time. Core times of all edges take $O(m)$ space. 

\stitle{Reducing Edges from $O(m)$ to $O(n)$.}
For improvement, we observe that not all edges are necessary for query processing.

\begin{definition}[CT-MSF]
\label{def:ctmsf}
Given a temporal graph $G$, an integer $k$, and a start time $ts$, the core-time-based minimum spanning forest (CT-MSF) is a minimum spanning forest of $G$ where the weight of each edge $e$ is its core time for $ts$, i.e., $\ct(e)_{ts}$.
\end{definition}

The CT-MSF is clearly an \eeq-equivalent structure for the fixed start time $ts$, and the following lemma holds.

\begin{lemma}
\label{lem:msf}
Given a CT-MSF $\msf$ of a fixed start time $ts$, for any end time $te$ ($ts \le te$), all edges with core times not later than $te$ in $\msf$ (i.e., $\{e \in \msf | \ct(e)_{ts} \le te\}$) form a spanning forest for the temporal $k$-cores $\mathcal{T}_{[ts,te]}$.
\end{lemma}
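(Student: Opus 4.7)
The plan is to reduce the statement to the classical threshold (cycle) property of minimum spanning forests, combined with the definition of edge core time. Specifically, I want to show that restricting $\msf$ to edges with core time at most $te$ yields a structure that (i) is a forest, (ii) shares the same vertex set as $\mathcal{T}_{[ts,te]}$, and (iii) realizes the same connected components.

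First I would translate the core-time condition into a statement about the temporal $k$-core. By \refdef{ct}, an edge $e$ satisfies $\ct(e)_{ts} \le te$ if and only if $e$ lies in $\mathcal{T}_{[ts,te]}$. Hence the set $F_{te} = \{e \in \msf : \ct(e)_{ts} \le te\}$ is a subset of the edges of $\mathcal{T}_{[ts,te]}$, and since $\msf$ is a forest, $F_{te}$ is also a forest. So (i) is immediate, and (ii)--(iii) reduce to showing that every pair of vertices $u,v$ connected in $\mathcal{T}_{[ts,te]}$ is already connected by $F_{te}$.

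Next I would invoke the cycle property of MSFs at the threshold $te$. Let $P$ be any path from $u$ to $v$ in $\mathcal{T}_{[ts,te]}$, so every edge on $P$ has core time $\le te$. For each edge $e = (x,y)$ on $P$, if $e \in \msf$ then $e \in F_{te}$ and we are done for that hop; otherwise $\msf \cup \{e\}$ contains a unique cycle, and by the cycle property of a minimum spanning forest every other edge on that cycle has weight $\le \ct(e)_{ts} \le te$. Thus $x$ and $y$ are connected in $F_{te}$ via the remainder of the cycle. Chaining these arguments along $P$ yields a path from $u$ to $v$ in $F_{te}$, which gives (iii). For (ii), note that every vertex of $\mathcal{T}_{[ts,te]}$ has degree at least $k \ge 1$ in $\mathcal{T}_{[ts,te]}$, hence it is connected to some neighbor there and (by the argument just given) is incident to some edge of $F_{te}$, so the vertex sets agree.

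The main obstacle I anticipate is stating the cycle property precisely in the presence of possible ties among edge weights: several edges may share the same core time, so the MSF is not unique, and I must make sure the argument uses only the minimality of $\msf$ (not a specific tie-breaking rule). This is resolved by applying the cycle property to the fixed $\msf$ given in the hypothesis, for which the property holds regardless of how ties were broken during construction, because any edge $e \notin \msf$ creates a unique cycle in $\msf \cup \{e\}$ whose other edges must have weight at most $w(e)$ on pain of contradicting the minimality of $\msf$.
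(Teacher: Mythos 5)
Your proof is correct. There is nothing in the paper to compare it against line by line: the lemma is stated without proof (the text merely asserts that the CT-MSF is ``clearly'' an EC-equivalent structure and that the lemma holds), so your argument supplies the missing justification, and it does so in what is essentially the intended way. The two ingredients you use are exactly what is needed: first, the equivalence that $\ct(e)_{ts}\le te$ holds iff $e$ is an edge of $\mathcal{T}_{[ts,te]}$, which rests on the monotonicity of the temporal $k$-core as $te$ grows for fixed $ts$ (the paper also states this fact, without proof, right after \refdef{ct}); second, the exchange form of the cycle property applied to the \emph{fixed} forest $\msf$, which, as you note, is insensitive to how ties among equal core times were broken. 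Two small points would be worth making explicit in a polished write-up: edges that never enter the $k$-core for start time $ts$ (including edges whose timestamps lie outside the window) have core time $\infty$, so they can appear in $\msf$ but are removed by the threshold, and the exchange argument should treat $\infty$ as a value exceeding every finite core time; and when you invoke the unique cycle in $\msf\cup\{e\}$ you are implicitly using that $\msf$, being a spanning forest of $G$, already connects the endpoints of any non-forest edge of $G$. Neither point affects the validity of your argument.
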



\begin{figure}[t!]
  \centering

  \begin{subfigure}[b]{0.3\textwidth}
    \centering
    \includegraphics[width=\textwidth]{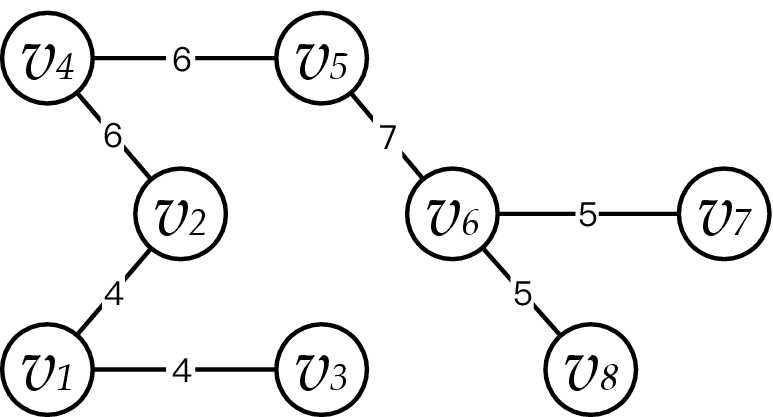}
    \caption{$\msf_3$}
    \label{fig:ctmsf}
  \end{subfigure}%
  \hfill
  \begin{subfigure}[b]{0.45\textwidth}
    \centering
    \includegraphics[width=\textwidth]{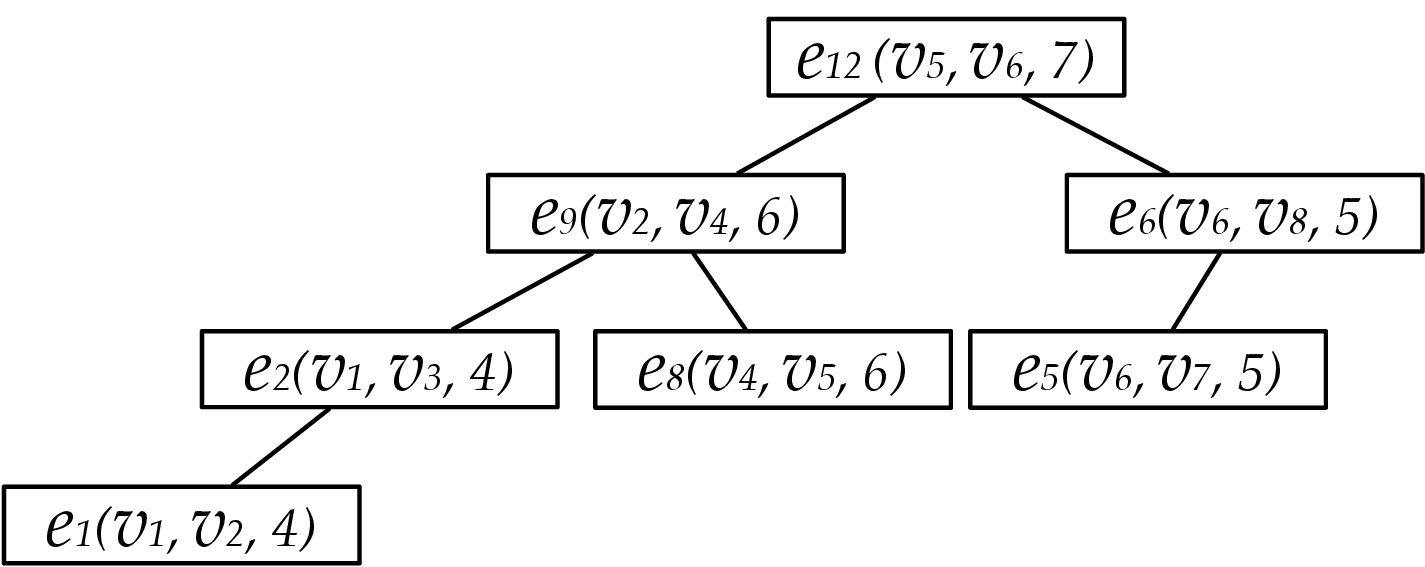}
    \caption{$\binforest_3$}
    \label{fig:ecb}
  \end{subfigure}%

  \caption{The CT-MSF and the corresponding ECB-Forest of $G$ for the start time $ts = 3$. The label on each edge indicates its core time for $ts = 3$.}
\end{figure}

\begin{example}
\reffig{ctmsf} shows the CT-MSF of $G$ for the start time $ts = 3$. It is EC-equivalent to $\mathcal{T}_{[3,t_{max}]}$; every vertex pair that is connected in $\mathcal{T}_{[3,t_{max}]}$ remains connected in $\mathcal{M}_3$, and every pair that is disconnected there is likewise disconnected in $\mathcal{M}_3$.
\end{example}

Converting the core-time-labelled graph into the CT-MSF gives us an $O(n)$ index (edges in a spanning tree are bounded by the number of vertices) for the start-anchored $k$-core search. Therefore, we can run the same searching algorithm on the CT-MSF to get the result.
Yet the CT-MSF still leaves room for optimisation because the number of neighbours per vertex remains uncontrolled.
%
First, the query algorithm needs to examine many incident edges at a vertex to locate those whose core times satisfy the window, wasting work on edges and vertices that will never be visited. Second, a vertex needs to record its entire neighbor list in the MSF whenever any single neighbor changes by changing the start time, which is expensive for high-degree vertices.

\stitle{From MSF to Binary Forest.}
Building on the CT-MSF, we propose our edge-centric forest to bound the number of children of each forest node by two (i.e., binary forest).
%
By edge-centric, we mean that each forest node represents an edge in the temporal graph, and we adopt edges in the CT-MSF as forest nodes. To avoid ambiguity, we hereafter use the term \textit{vertex} to refer to a vertex in the original graph, and the term \textit{node} to refer to a node in the forest structure. 
The forest relies on a total order of all nodes (i.e., edges in $G$) by core times. Given the start time $ts$, two nodes $eu$ and $ev$ with core times $\ct(eu)_{ts}$ and $\ct(ev)_{ts}$, respectively, we say $eu$ ranks higher than $ev$ if $\ct(eu)_{ts} > \ct(ev)_{ts}$, and the tie is broken by the edge ID. 



\begin{definition}[ECB-Forest]
\label{def:forest}
Given a start time $ts$, let $\msf_{ts}$ be a CT-MSF for $ts$. The ECB-Forest (\underline{e}dge-centric \underline{c}ore-equivalent \underline{b}inary forest) for $ts$, denoted by $\binforest_{ts}$, is a binary forest.
Each node $x$ in $\binforest_{ts}$ corresponds to an edge in $\msf_{ts}$. Let $u$ and $v$ be the terminals of the corresponding edge of $x$.
\begin{itemize}






    \item Left child $l$ of $x$: (1) $l$ is connected to $u$ in the graph of $\{e \in \msf_{ts} | \ct(e)_{ts} < \ct(x)_{ts} \}$; (2) $\ct(l)_{ts}$ is maximized.

    \item Right child $r$ of $x$: (1) $r$ is connected to $v$ in the graph of $\{e \in \msf_{ts} | \ct(e)_{ts} < \ct(x)_{ts} \}$; (2) $\ct(r)_{ts}$ is maximized.
\end{itemize}
\end{definition}

\begin{example}
\reffig{ecb} shows the ECB-Forest $\binforest_3$ for $G$ at $ts = 3$. In this forest, each node represents an edge of the CT-MSF $\msf_3$, and the parent-child relationships enforce the global edge ranking: every parent node corresponds to a strictly higher-ranked edge, while every child corresponds to a strictly lower-ranked edge.
\end{example}

ECB-Forest is clearly \eeq-equivalent to the temporal $k$-core of the original graph and also guarantees the following elegant properties for query processing.


\begin{lemma}
\label{lem:one2all}
Given a temporal graph $G$ and a start time $ts$, for any end time $te$, all nodes belonging to the same temporal $k$-core component of $[ts,te]$ are connected in the ECB-forest $\binforest_{ts}$.
\end{lemma}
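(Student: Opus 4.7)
The plan is to exhibit, for an arbitrary temporal $k$-core component $C$ of $[ts,te]$, a single subtree of $\binforest_{ts}$ that contains every MSF-edge lying inside $C$. Fix such a $C$ and let $S_C$ be the set of edges of $\msf_{ts}$ whose endpoints both lie in $C$; by \reflem{msf}, $S_C$ is a spanning tree of $C$, and in particular every $e \in S_C$ satisfies $\ct(e)_{ts} \le te$. Let $e^* \in S_C$ be the edge with maximum core time (ties broken by edge ID, as in the global forest order). I will show that every $e \in S_C$ is a descendant of $e^*$ in $\binforest_{ts}$, which places all of $S_C$ into a single tree of the forest.

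The key step is the following structural invariant of $\binforest_{ts}$: for every node $e$, the set of descendants of $e$ (with $e$ itself) coincides with the connected component of $e$ in the subgraph of $\msf_{ts}$ induced by edges with core time at most $\ct(e)_{ts}$. I would prove this by induction on the position of $e$ in the core-time order. For the inductive step, let $e = (u,v)$; because $\msf_{ts}$ is itself a forest, the $u$-component and $v$-component of the sub-MSF restricted to core times strictly less than $\ct(e)_{ts}$ are disjoint, and by \refdef{forest} the left (resp.\ right) child $l$ (resp.\ $r$) of $e$ is the maximum-core-time edge of the $u$-component (resp.\ $v$-component). Applying the induction hypothesis to $l$ and $r$ identifies their subtrees with these two components, so the subtree of $e$ equals their union together with $\{e\}$, which is exactly the component of $e$ in the sub-MSF restricted to edges of core time at most $\ct(e)_{ts}$. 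The degenerate cases (one of $u,v$ isolated in the prior sub-MSF, so the corresponding child is absent) are handled identically with an empty side component.

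With the invariant in hand, the lemma follows directly. For any $e \in S_C$, the unique $S_C$-path from $e$ to $e^*$ uses only edges of core time at most $\ct(e^*)_{ts}$, so $e$ is connected to $e^*$ in the subgraph of $\msf_{ts}$ restricted to such edges; by the invariant, $e$ is therefore a descendant of $e^*$ in $\binforest_{ts}$. Hence $S_C$, which spans $C$, lies entirely in the subtree rooted at $e^*$, and is in particular connected in $\binforest_{ts}$.

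The principal obstacle is the structural invariant: its proof requires carefully matching the recursive definition of the left and right children in \refdef{forest} with a Kruskal-style, core-time-ordered processing of $\msf_{ts}$, and checking that the declared child is indeed the maximum-core-time edge of the appropriate side component (rather than some edge in an unrelated part of the sub-MSF). Once that identification is in place, combined with the fact that $\msf_{ts}$ is a forest so that $u$- and $v$-side components cannot yet coincide prior to adding $e$, the remaining bookkeeping and the final argument are immediate consequences of \reflem{msf}.
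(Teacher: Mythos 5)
The paper never actually proves \reflem{one2all}; it is asserted as an immediate consequence of \refdef{forest} (``ECB-Forest is clearly \eeq-equivalent \dots and also guarantees the following elegant properties''). So your argument is not a variant of the paper's proof but a filling-in of a missing one, and it is correct. The structural invariant you isolate --- the subtree rooted at a node $e$ of $\binforest_{ts}$ equals the connected component of $e$ in the sub-forest of $\msf_{ts}$ formed by the edges ranked no higher than $e$ --- is the right formalization, and the Kruskal-style induction goes through: since $\msf_{ts}$ is acyclic the $u$-side and $v$-side components below $e$ are disjoint, and since $l$ is the highest-ranked edge of the $u$-side component that component is unchanged when the rank threshold is lowered from $e$ to $l$, so the induction hypothesis applied to $l$ and $r$ identifies their subtrees with the two side components. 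Your argument in fact yields slightly more than the statement: all connecting paths stay inside the subtree of $e^*$, so every node on them has core time at most $\ct(e^*)_{ts} \le te$, which is the form needed to justify the pruned BFS of \refalg{query}.

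Two small points to tighten, neither a genuine gap. First, state the invariant (and your reading of \refdef{forest}) in terms of the global rank --- core time with ties broken by edge ID --- rather than raw core time: under equal core times the literal condition $\ct(e)_{ts} < \ct(x)_{ts}$ makes ``the maximum-core-time edge'' non-unique and the right-hand side of your invariant can strictly contain the descendant set. The paper itself silently uses rank here (in \reftab{index}, $e_8$ is the right child of $e_9$ although both have core time $6$), and with rank substituted throughout, your induction and the choice of $e^*$ are unaffected, since the path from $e$ to $e^*$ inside $S_C$ still uses only edges ranked at most $e^*$. Second, the claim that every $e \in S_C$ satisfies $\ct(e)_{ts} \le te$ does not come verbatim from \reflem{msf}; it needs a one-line MSF argument: the endpoints of $e$ lie in the same component of the temporal $k$-core, hence by \reflem{msf} they are connected in the sub-forest of edges with core time at most $te$, and since $\msf_{ts}$ is a forest the unique $\msf_{ts}$-path between them is $e$ itself, forcing $\ct(e)_{ts} \le te$ (alternatively, define $S_C$ directly as the edges of the spanning forest of \reflem{msf} lying in $C$).
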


\reflem{one2all} implies we can detect all nodes (graph edges) in a connected component $C$ by searching from any one node in $C$ in $\binforest_{ts}$. We now discuss the in-memory data structure for the ECB-Forest and the overall index for any query windows followed by our query processing algorithm for the historical $k$-core component search. 

An edge in the graph may have multiple core times w.r.t. different start times. 
We simply regard the same edge with different core times as different parallel edges. In this way, every edge $e$ now has a unique core time, and we use $lst(e)$ to denote the latest start time holding for the core time. Updating the core time of an edge when changing the start time is equivalent to inserting a new edge.

We record the mapping from the ID of each graph edge to a forest node and the reverse mapping. In addition to the core time, for each node $eu$, we use an array to record the neighbors of $eu$ in ECB-Forests for different start times. Specifically, each item in the array of $eu$ has the form of $\la startTime, leftChild, rightChild, parent \ra$; it records, for that start time, $eu$'s parent and its two children in the ECB-Forest. The items in the array is arranged in decreasing order of start times. Note that an item is stored in the array only if the neighbors (a parent and two children) are not the same as the previous start time. We call the overall index for temporal $k$-core component search the \idx (pruned ECB-forest index).

\begin{table}[tbp]
  \centering
  \footnotesize
  \renewcommand{\arraystretch}{1.0}
  \setlength{\arrayrulewidth}{0.5pt}
  \resizebox{\columnwidth}{!}{%
    \begin{tabular}{|c|c|c|l|}
      \hline
      Node & $\ct$ & $lst$ & PECB Index Entries \\
      \hline
      $e_1(v_1, v_2, 4)$ & 4 & 4 &  $\la 4,-,-,e_2 \ra$ \\
      \hline
      $e_2(v_1, v_3, 4)$ & 4 & 4 &  $\la 4,e_1,-,e_{11} \ra$, $\la 3,e_1,-,e_9 \ra$, $\la 2,e_1,-,e_4 \ra$\\
      \hline
      $e_3(v_2, v_3, 4)$ & 4 & 4 &  \\
      \hline
      $e_4(v_3, v_8, 2)$ & 5 & 2 &  $\la 2,e_2,-,e_6 \ra$\\
      \hline
      $e_5(v_6, v_7, 4)$ & 5 & 4 &  $\la 4,-,-,e_6 \ra$\\
      \hline
      $e_6(v_6, v_8, 5)$ & 5 & 4 &  $\la 4,e_5,-,e_{12} \ra$, $\la 2,e_5,e_4,e_9 \ra$\\
      \hline
      $e_7(v_7, v_8, 5)$ & 5 & 4 &  \\
      \hline
      $e_8(v_4, v_5, 3)$ & 6 & 3 &  $\la 3,-,-,e_9 \ra$\\
      \hline
      $e_9(v_2, v_4, 6)$ & 6 & 3 &  $\la 3,e_2,e_8,e_{12} \ra$, $\la 2,e_6,e_8,- \ra$\\
      \hline
      $e_{10}(v_2, v_5, 6)$ & 6 & 3 &  \\
      \hline
      $e_{11}(v_2, v_5, 6)$ & 7 & 4 & $\la 4,e_2,-,e_{12} \ra$ \\
      \hline
      $e_{12}(v_5, v_6, 7)$ & 7 & 4 & $\la 4,e_{11},e_6,- \ra$, $\la 3,e_9,e_6,- \ra$ \\
      \hline
    \end{tabular}
  }
  \caption{The core time, $lst$ (latest start time for the corresponding core time), and PECB‐index labels for each node. }
  \label{tab:index}
\end{table}

\begin{example}

\reftab{index} lists every forest node in the PECB index, showing its core time, its last-start time (for readability only; it is not stored), and the node’s entry array, ordered from lowest to highest in the global ranking. 
For example, edge $e_1(v_1, v_2, 4)$ has core time 4 and a $lst$ time 4. Its PECB entry array contains a single record $\la 4, -, -, e_2\ra$, indicating that at $ts = 4$, $e_1$ has no left child, no right child, and $e_2$ as its parent. 
Edge $e_3$ never appears in any CT-MSF, so its entry array is empty. 
Finally, edge $(v_2, v_5, 6)$ attains two different core-time values at different start times; we treat these as separate forest nodes $e_{10}$ and $e_{11}$. 
\end{example}

\begin{theorem}
The space complexity of \idx is $O(n \cdot \overline{t})$ where $\overline{t}$ denotes the average number of labels for each node.
\end{theorem}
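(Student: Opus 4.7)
The plan is to decompose the storage of \idx into contributions from individual forest nodes and then aggregate. I would first observe that every node $e$ in the index holds a constant-size header (its core time $\ct(e)$ and the marker $lst(e)$) together with a chronologically ordered list of versioned tuples of the form $\langle ts, l, r, p \rangle$; by construction such a tuple is appended only when the parent or at least one child of $e$ differs from the previous start time, so each tuple occupies $O(1)$ space. Hence a single node $e$ contributes $O(1 + |\mathrm{labels}(e)|)$ to the total footprint.

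Summing over all $N$ forest nodes stored in the index yields total space
\[
O\!\Bigl(N + \sum_{e} |\mathrm{labels}(e)|\Bigr) \;=\; O(N + N\bar{t}) \;=\; O(N \cdot \bar{t}),
\]
where the equalities use the definition $\bar{t} = \tfrac{1}{N}\sum_e |\mathrm{labels}(e)|$ and the additive $N$ is absorbed under the mild assumption $\bar{t}\geq 1$. It then remains to argue $N = O(n)$. By \refdef{forest}, the nodes of every snapshot $\binforest_{ts}$ are in one-to-one correspondence with the edges of $\msf_{ts}$, which is a spanning forest of the temporal $k$-core and therefore contains at most $n-1$ edges. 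Distinct snapshots share a forest node whenever the underlying pair (graph edge, core time) persists across start times, so the distinct-node count is governed by this per-snapshot footprint rather than accumulated across all $t_{max}$ windows.

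The main obstacle is making the step $N = O(n)$ fully rigorous: the paper admits parallel-edge nodes whenever the same graph edge attains a different core time at a different start time, which can in principle inflate $N$ beyond $n$. I plan to handle this by charging every newly introduced parallel-edge node to the change event that triggered it; since a core-time shift necessarily emits a fresh label on the surrounding nodes of the ECB-Forest, each "extra" node amortizes against its accompanying extra labels, and the product $N\cdot\bar{t}$ stays within $O(n\cdot\bar{t})$. Combining the per-node contribution bound with this amortization argument then yields the claimed space complexity.
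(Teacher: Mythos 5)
Your accounting in the first half is essentially the paper's: bound each $\binforest_{ts}$ by $n-1$ nodes because it is a spanning forest, note that a versioned tuple is written only when a node's neighborhood changes, and sum the $O(1)$-size tuples over nodes. The genuine problem is the last step, which you yourself flag and then try to close by amortization. With your own definition $\overline{t}=\frac{1}{N}\sum_e|\mathrm{labels}(e)|$, the product $N\cdot\overline{t}$ \emph{is} the total number of stored labels $L$, so the claim ``$N\cdot\overline{t}$ stays within $O(n\cdot\overline{t})$'' is literally equivalent to $N=O(n)$ --- the very statement the amortization was supposed to replace. Charging each parallel-edge node to the core-time change that created it only shows that every stored node carries at least one label, i.e.\ $N\le n+L$; that gives total space $O(n+L)=O(N\cdot\overline{t})$, not $O(n\cdot\overline{t})$. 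Extra nodes inflate $N$ and $L$ together, so the charging argument never separates $N$ from $n$, and the step fails as written.

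Note also that $N=O(n)$ is not something you should expect to prove outright: $N$ is what the paper calls $|E_{ct}|$ in its construction analysis (the number of distinct edge/core-time instances, treated there as an independent parameter), and already in the running example one graph edge contributes two parallel nodes ($e_{10}$ and $e_{11}$); across many start times the stored node count can exceed $n$. The paper's own proof sidesteps this by counting per start time --- at most $n-1$ nodes of $\binforest_{ts}$ can receive a label at any given $ts$ --- and then directly declares the label total to be $n\cdot\overline{t}$, which implicitly identifies the node count with $O(n)$ (equivalently, it averages the labels over $n$ rather than over the stored parallel nodes). So to reach the stated bound you must either adopt that averaging convention explicitly, or supply a real argument bounding the number of stored forest nodes by $O(n)$; with your per-$N$ definition of $\overline{t}$, the bound your argument honestly yields is $O(N\cdot\overline{t})$.
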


\begin{proof}
The \idx consists of multiple ECB-Forests $\binforest_{ts}$, each corresponding to a distinct start time $ts$. Each $\binforest_{ts}$ is a forest with at most $n - 1$ nodes since it represents a cycle-free MSF. Labels are stored only for nodes in $\binforest_{ts}$ at each $ts$, so for each start time, at most $n - 1$ nodes have labels. Since $\overline{t}$ is the average number of labels per node, the total number of labels is $n \cdot \overline{t}$.
\end{proof}

While $\overline{t}$ is theoretically bounded by the total number of distinct timestamps in the temporal graph, it is typically much smaller in practice. This disparity arises because structural changes in the ECB-Forest only occur when the connectivity is altered by the introduction of new edges. In practice, such changes are relatively infrequent compared to the total number of timestamps.

\begin{algorithm}[t]
\caption{\algquery}
\label{alg:query}
\KwIn{$u, ts, te$ and the \idx}
\KwOut{The connected componet in $G_{[ts,te]}$ containing $u$}
$R \gets \emptyset$\;
$Q \gets$ an empty queue\;
$e \gets$ the lowest ranked node in $\binforest_{ts}$ containing $u$ \;
$Q.push(e)$\;

\While{$Q \neq \emptyset$}{
    $e(v,v',t) \gets Q.pop()$\;
    \lIf{$e$ is visited}{\kwcontinue}
    mark $e$ as visited\;

    $R \gets R \cup \{v,v'\}$\;
    \ForEach{neighbor $e'$ of $e$ in $\binforest_{ts}$ by binary search}{
        \lIf{$\ct(e') \le te$}{$Q.push(e')$}
    }

}

\Return{$R$}
\end{algorithm}

\subsection{The Query Algorithm}

We present our query processing algorithm in \refalg{query}. 
The idea is to identify the relevant edges (represented as nodes in an ECB-Forest) connected to $u$, perform a Breadth-First Search (BFS) traversal across the ECB-Forest to explore all reachable nodes, and  return the vertices associated with these nodes as the result.

The algorithm begins by identifying the lowest ranked forest nodes connected to the query vertex $u$ within $\binforest_{ts}$ (lines 3). By maintaining a direct lookup from each start time to its corresponding entry nodes, this step can be performed in logarithmic time. Next, we initialize a BFS traversal by enqueuing these forest nodes into a queue $Q$ (lines 5-11). While the queue $Q$ is non-empty, we dequeue a node $e$ and examine its neighbors in the ECB-Forest for the specified start time $ts$. These neighbors may include parent or child nodes in the forest structure. For each neighbor $e'$, we perform a binary search to retrieve its corresponding label to the query time window $[ts, te]$ (line 10). The corresponding label is the label with the smallest start time $ts'$ such that $ts' >= ts$. If a valid label is found for a neighbor $e'$, we enqueue it into $Q$ for further exploration. This process repeats, allowing the BFS to systematically traverse all reachable forest nodes in the ECB-Forest.

During the BFS process, the algorithm extracts all vertices from the edges represented by the forest nodes. These vertices collectively form the connected component of the query vertex $u$ within the temporal $k$-core $\tcore$, which is returned as the final result.

\begin{example}
Given $k = 2$, time window $[3,5]$, and the query vertex $v_2$, we first retrieve the lowest ranked node in $\binforest_3$, which is $e_2$. The corresponding index label $\la 3, e_1, -, e_9 \ra$ indicates that $e_2$ is adjacent to $e_1$ and $e_9$. We enqueue these two nodes.
Next, we dequeue $e_1$ and read its label $\la 4, -, -, e_2 \ra$, which adds $v_1$ and $v_3$ to the result set but reveals no new neighbours. Finally, we dequeue $e_9$, but see that its core time of 6 exceeds the query’s end time $te = 5$, so we discard it without further exploration. 
The queue is now empty, and the vertices discovered, namely $v_1$, $v_2$ and $v_3$, form the desired 2-core component in $\mathcal{T}_{[3,5]}$.
\end{example}

\begin{theorem}
\label{lem:query_time}
The time complexity of \refalg{query} is $O(r \cdot \log{\overline{t}})$, where $d$ is the average degree and $r$ is the result size.
\end{theorem}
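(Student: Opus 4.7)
The plan is to bound the work of \refalg{query} by (i) counting the number of forest nodes dequeued during the BFS, and (ii) bounding the cost per node. The overall bound will then be the product of these two quantities.

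First, I would argue that the number of forest nodes visited by the BFS is $O(r)$. By \reflem{one2all}, all forest nodes that correspond to edges of the temporal $k$-core component containing $u$ are connected in $\binforest_{ts}$, so the BFS starting from the entry node associated with $u$ and restricted (by the test $\ct(e')\le te$ on line~11) to edges with core time at most $te$ visits exactly the forest nodes whose corresponding graph edges lie in that component. These nodes induce a subtree of $\binforest_{ts}$; since each forest node represents a graph edge and its two endpoints contribute to the result set $R$, and the component has $r$ vertices, a standard counting argument on trees gives that the number of visited forest nodes is at most $r-1=O(r)$.

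Second, I would bound the per-node cost. Since $\binforest_{ts}$ is binary by \refdef{forest}, every dequeued node has at most three neighbors (one parent and at most two children). Retrieving each neighbor requires locating the entry of the node's array whose start-time field is the largest value not exceeding $ts$; because entries are sorted by start time (in descending order in the stored format, with at most $\overline{t}$ entries on average), this lookup costs $O(\log \overline{t})$ via binary search. Hence each node contributes $O(\log \overline{t})$ work. The initial step of finding the lowest-ranked entry node containing $u$ is also handled by a logarithmic lookup (line~3 of \refalg{query}), which is absorbed in the same bound.

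Combining the two bounds yields total running time $O(r\cdot \log\overline{t})$. The only genuinely delicate point is the first bound: one must verify that the BFS never explores a forest node outside the target component, which is exactly what the core-time filter on line~11 together with the \eeq-equivalence of $\binforest_{ts}$ guarantees; the rest of the argument reduces to the tree-edge count and a straightforward amortization of the binary search over the at most $3$ neighbors of each visited node.
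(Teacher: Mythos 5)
Your proposal is correct and follows essentially the same decomposition as the paper's own proof: a logarithmic lookup for the entry node, an $O(r)$ bound on the number of BFS-visited forest nodes, and an $O(\log\overline{t})$ binary search per visited node, multiplied together. You merely supply more detail than the paper does (the spanning-tree count of at most $r-1$ visited nodes via \reflem{one2all} and the at-most-three-neighbors consequence of the binary structure), which strengthens rather than changes the argument.
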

\vspace{-0.5em}

\begin{proof}
Locating the node connected to the query vertex $u$ with the lowest ranked takes $O(\log \overline{t})$ time, where $\overline{t}$ is the average number of labels. The BFS traversal in the ECB-Forest visits each node in the connected component once, with the number of nodes visited being proportional to $r$. For each node, a binary search is performed on its labels to validate the query time window, costing $O(\log \overline{t})$ per node. Thus, the total time complexity is $O(r \cdot \log \overline{t})$.
\end{proof}

\section{Index Construction}
\label{sec:construct}

\stitle{Computing Edge Core Times.} Given an integer $k$, core times of all vertices for all possible start times can be efficiently computed by the algorithm in \cite{DBLP:journals/pvldb/YuWQ00021}. We first briefly review the algorithm for self-completeness. Then we discuss how to extend it to derive the edge core times. 
To derive vertex core times, the algorithm first computes the core times of vertices for the earliest start time by iteratively removing edges with the latest timestamps. The core time of a vertex is $t$ if it leaves the $k$-core after removing the edge at $t$. Then, starting from the earliest start time, the algorithm updates the core time of vertices by iteratively increasing the start time. Certain values for each vertex are maintained in constant time when increasing the start time and help track if the core time needs update. If so, the algorithm recomputes the core time of the vertex for the current start time, which needs to scan the core times of all neighbors in linear time. Many vertices may not update core times when increasing the start time, and the algorithm only outputs core times of each vertex if the value is different from that of the previous start time in the iteration. The time complexity of the algorithm is bounded by $O(vct \cdot d)$, where $vct$ and $d$ represent the size of the output and the average degree, respectively.

The core time of any edge for the start time $ts$ is the larger one among the core times of its terminal vertices for $ts$. Therefore, when the core time of a vertex $u$ changes in the algorithm, we simply update the core times of all edges connecting to $u$. In this way, we derive core times of all edges as byproducts when the algorithm terminates. The time complexity does not change since we always need to scan all neighbors of a vertex when its core time changes.

\stitle{Idea for ECB-Forest Maintenance.}
We propose a novel incremental algorithm to construct the index. Given an ECB-Forest for a start time $ts$, we study an algorithm to update the forest given a new edge $e$ changing its core time at $ts - 1$. In this way, we can initialize an empty ECB-Forest for the largest start time and derive the ECB-Forests by inserting edges. For the ECB-Forest of each start time, we only store the difference compared with the previous one as discussed in \refsubsec{index}. 

We now discuss how to update the ECB-Forest $\binforest$ given a new edge. 
Below, we use $ts$ to denote the start time before inserting $e$, i.e., $ ts = lst(e) + 1$. Based on \reflem{msf}, simply adding the new edge $e$ into $\binforest$ already yields a forest structure that is \eeq-equivalent to the graph at time $ts-1$. Hence, we first create a forest node for $e$ and insert it into $\binforest$ according to the global edge order. We also use $e$ to denote the corresponding forest node of the new edge. Note that adding $e$ may create a cycle within the current $\binforest$, and hence in the corresponding MSF. In that case, the edge with the highest rank on the cycle must be removed from $\binforest$ after the insertion of $e$. Two outcomes are possible:

\begin{itemize}
  \item  If the expired edge is an existing edge $e'\neq e$, $e'$ is removed from $\binforest$.

  \item If $e$ itself is the highest-ranked edge in the cycle, the deletion step removes the node we have just inserted, and $\binforest$ returns to its previous state.
\end{itemize}

In both cases the expired node will be safely removed during the construction process of adding the new edge. With the new node $e$, the main challenges in maintaining the ECB-Forest are: (1) how to connect the new node $e$ to the existing ECB-Forest to produce an EC-equivalent graph and (2) how the structure updates to a valid ECB-Forest accordingly.

\begin{algorithm}[t]
\caption{\algfind}
\label{alg:find}
\KwIn{$e(u,v)$}
\KwOut{$l$, $r$, $eu$, $ev$}
    
    $l \gets$ the highest-ranked edge connecting to $u$ in $\binforest$ s.t. $\ct(l) < \ct(e)$\;
    $r \gets$ the highest-ranked edge connecting to $v$ in $\binforest$ s.t. $\ct(r) < \ct(e)$\;
    $eu \gets$ the lowest-ranked edge connecting to $u$ in $\binforest$ s.t. $\ct(eu) > \ct(e)$\;
    $ev \gets$ the lowest-ranked edge connecting to $v$ in $\binforest$ s.t. $\ct(ev) > \ct(e)$\;

    \If {$l$ exists} {
        \While {$l$ has a parent and $\ct(l.parent) < \ct(e)$} {
            $l \gets l.parent$\;
        }
        \If {$l.parent < eu$} {
            $eu \gets l.parent$\;
        }
    }
    update $r$ and $ev$ by performing lines 5-9\;
    \textbf{Return} $l, r, eu, ev$\;
\end{algorithm}

\begin{figure*}[tbp]
\centering
\begin{subfigure}[b]{0.23\linewidth}
    \includegraphics[width=\linewidth]{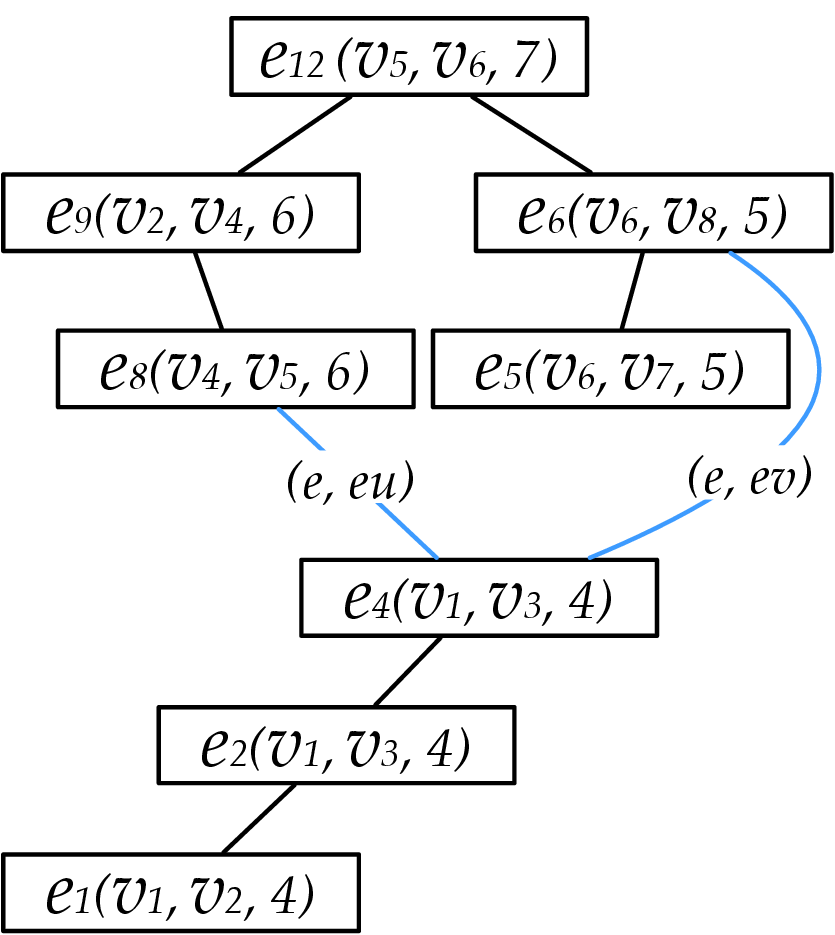}
    \caption{Node insertion: $F_1$.}
    \label{fig:step_1}
\end{subfigure}
\hfill
\begin{subfigure}[b]{0.23\linewidth}
    \includegraphics[width=\linewidth]{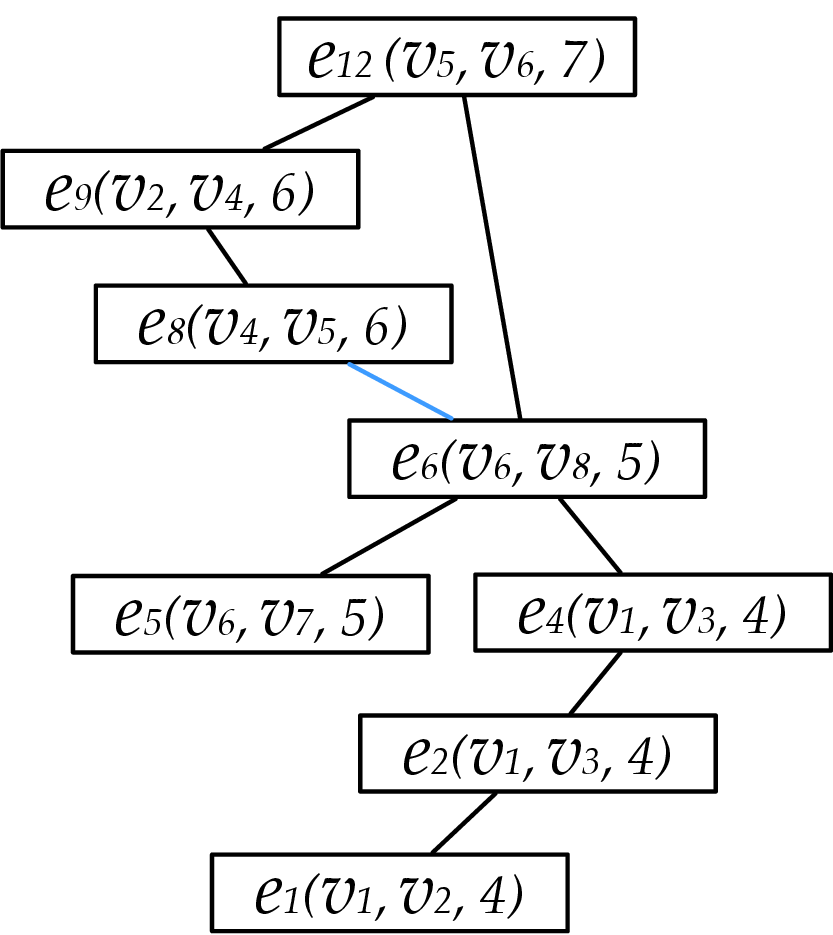}
    \caption{WE-operation: $F_2$.}
    \label{fig:step_2}
\end{subfigure}
\hfill
\begin{subfigure}[b]{0.23\linewidth}
    \includegraphics[width=\linewidth]{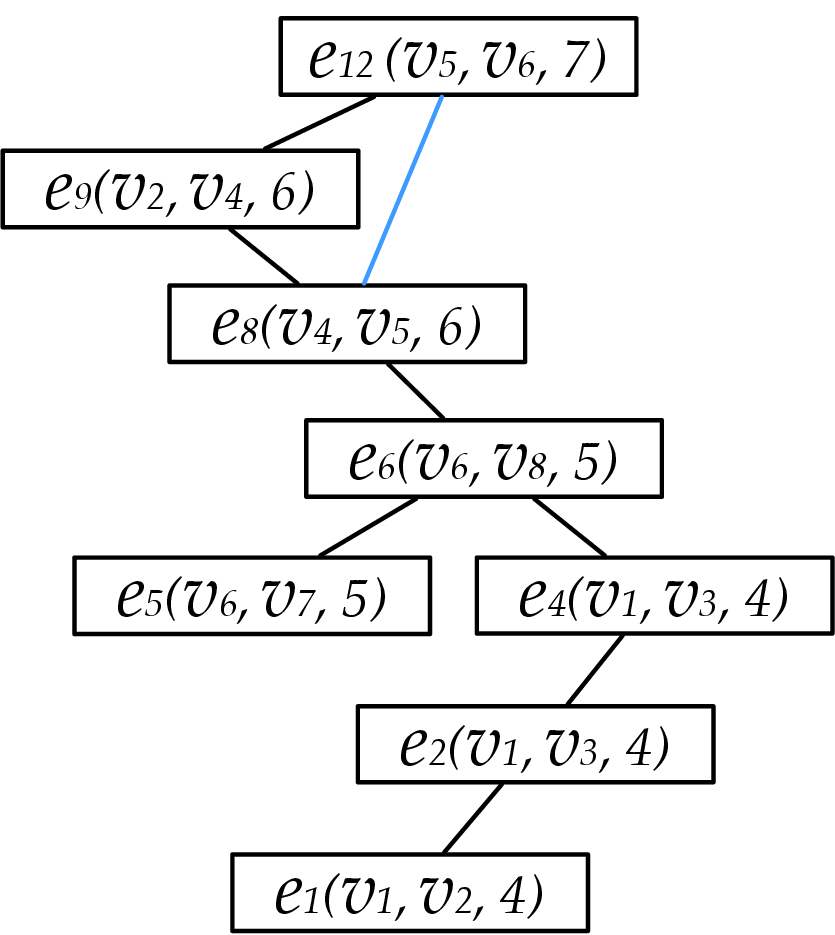}
    \caption{WE-operation: $F_3$.}
    \label{fig:step_3}
\end{subfigure}
\hfill
\begin{subfigure}[b]{0.23\linewidth}
    \includegraphics[width=\linewidth]{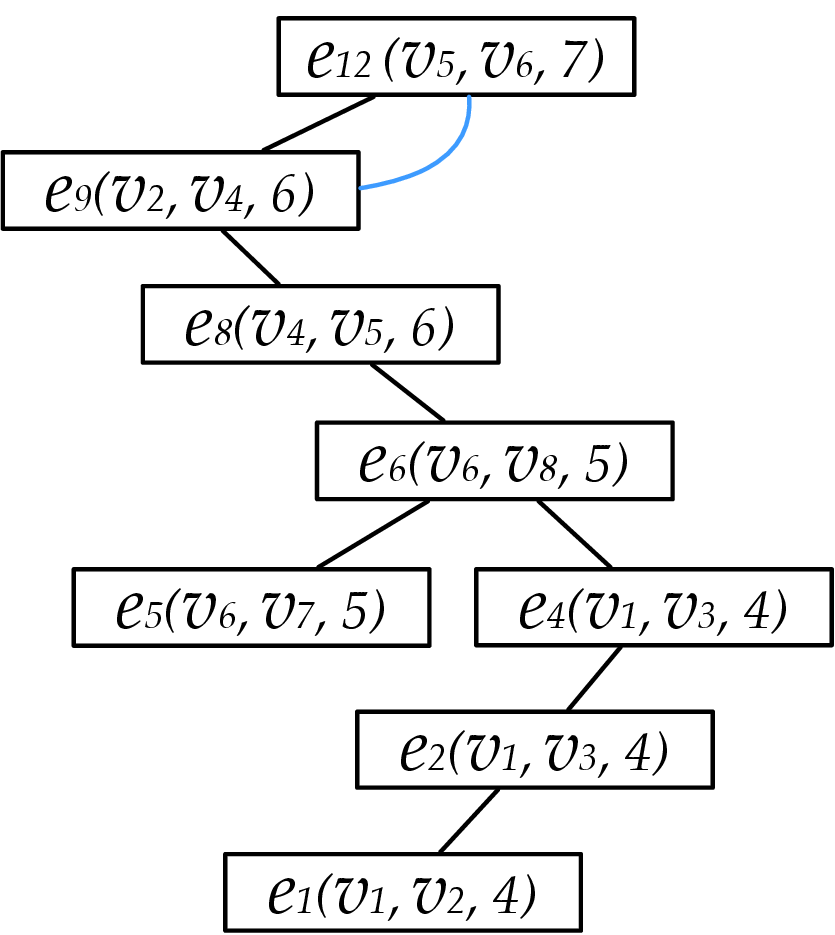}
    \caption{WE-operation: $F_4$.}
    \label{fig:step_4}
\end{subfigure}
\caption{An illustration of the process of adding the edge $e_{4} = (v_1, v_3, 4)$ into the ECB-Forest $\binforest_3$ to obtain $\binforest_2$. New forest edges of each step are colored in blue.}
\label{fig:te4}
\end{figure*}

\stitle{Finding the Correct Position.}
To address the first challenge, we first connect $e$ to certain existing forest nodes. To correctly insert a new edge $e$, it must be positioned relative to existing edges with both higher and lower ranks. We first identify the left child $l$ and right child $r$ of $e$ based on \refdef{forest}, and set the insertion points $eu$ and $ev$ as the parents of $l$ and $r$, respectively. If $l$ does not exist, we instead set $eu$ to be the incident edge on $u$ with the lowest rank among all edges whose rank exceeds that of $e$. The same applies to $ev$ if $r$ does not exist. This ensures that $e$ is anchored to forest nodes with lower rank via its children $l$ and $r$, while its parent links (to $eu$ and $ev$) connect it to the forest structure with higher rank.

For simplicity, we use $\oplus$ and $\ominus$ to denote operators of inserting a tree edge and deleting a tree edge, respectively. For example, $F \oplus (e,e')$ means connecting the nodes $e$ and $e'$ in $F$. We propose the following lemma to guarantee that inserting $e$ in this manner preserves the structural properties of the ECB-Forest:


\begin{lemma}
\label{lem:addedge}
Given an ECB-Forest $\binforest$ of a temporal graph $G$ for the start time $ts$, a new edge $e$ with $lst(e) = ts - 1$, and the insertion points $l$, $r$, $eu$, and $ev$ as defined above, the updated forest $\binforest \oplus (e, l) \oplus (e, r) \oplus (e, eu) \oplus (e, ev) \ominus (l, eu) \ominus (r, ev)$ remains EC-equivalent and edge-connected for $ts - 1$.
\end{lemma}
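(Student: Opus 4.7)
The plan is to verify, for the updated forest $\binforest' = \binforest \oplus (e,l) \oplus (e,r) \oplus (e,eu) \oplus (e,ev) \ominus (l,eu) \ominus (r,ev)$, two properties: edge-connectedness (every pair of nodes connected in $\binforest$ stays connected in $\binforest'$, with no cycle introduced) and EC-equivalence at $ts-1$ (for every $te \ge ts-1$, the connected components of the induced sub-forest on nodes with core time $\le te$ coincide with those of $\mathcal{T}^k_{[ts-1,te]}$).

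First, I would establish that $(l,eu)$ and $(r,ev)$ are in fact tree edges of $\binforest$, so that the $\ominus$ operations are well defined. By \refalg{find}, after the climbing step $l$ is the topmost ancestor with $\ct(l)<\ct(e)$ whose parent has core time $\ge \ct(e)$. The final $eu$ is the minimum-rank candidate between this parent and the initial edge incident to $u$, and by the maximality condition in \refdef{forest} the former is guaranteed to be no higher in rank than the latter, so $eu$ coincides with the parent of $l$ in $\binforest$. The argument for $(r,ev)$ is symmetric. When $l$ or $r$ does not exist (i.e., $u$ or $v$ is not connected below rank $\ct(e)$ in $\msf_{ts}$), the corresponding pair of operations is simply omitted and $e$ attaches as a child of $eu$ (or $ev$) alone, joining what were previously disjoint trees.

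Next, I would verify EC-equivalence via a splicing argument: the combined effect of $\ominus(l,eu) \oplus (e,l) \oplus (e,eu)$ is to insert $e$ into the middle of the tree edge $(l,eu)$, preserving any path that formerly used that edge. Fixing $te \ge ts-1$, I consider three sub-cases on the $u$-side. If $te<\ct(e)$, then both $e$ and $eu$ are absent from the induced sub-forest, and the original tree edge $(l,eu)$ was already excluded since $\ct(eu)>\ct(e)>te$, so the operations have no net effect. If $\ct(e)\le te<\ct(eu)$, only $(e,l)$ survives on this side, which is precisely the attachment of $u$ to $l$'s component mandated by the temporal $k$-core at $[ts-1,te]$. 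If $te\ge \ct(eu)$, the path $l$–$e$–$eu$ is fully present, carrying the same cross-cut that $(l,eu)$ used to carry. A symmetric analysis on the $v$-side, combined with \reflem{msf} applied to the CT-MSF at $ts-1$ (which now contains $e$ as an MSF edge), yields the desired equivalence.

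Edge-connectedness follows immediately from the splicing view: each removed tree edge is replaced by a length-two detour through $e$, so no pair of previously connected nodes is separated, and no cycle is introduced at this stage (the possible degree violation at $e$ is resolved later by the WE-operations). The main obstacle is the boundary case analysis in the first step—specifically, justifying that the $eu$ returned by \refalg{find} is precisely the parent of $l$ in $\binforest$ whenever the climbing loop updates $eu$ away from the initial incident candidate, and handling the degenerate case in which $e$ itself is the maximum-rank edge on a newly formed cycle (so the eventually expired edge coincides with $e$). Both subtleties become manageable once the MSF cut and cycle properties are invoked explicitly.
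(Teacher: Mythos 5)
Your main argument --- viewing $\ominus(l,eu)\oplus(e,l)\oplus(e,eu)$ (and symmetrically on the $v$-side) as splicing $e$ into the middle of the tree edge $(l,eu)$, and then checking the three regimes $te<\ct(e)$, $\ct(e)\le te<\ct(eu)$, and $te\ge\ct(eu)$ --- is a more detailed rendering of exactly the argument the paper gives, which simply observes that the added links through $e$ take over the reachability previously provided by $(l,eu)$ and $(r,ev)$, so components are preserved for every end time. Your preliminary concern about whether $(l,eu)$ and $(r,ev)$ are genuine tree edges is largely moot for this lemma: in the text defining the insertion points, $eu$ (resp.\ $ev$) \emph{is defined} as the parent of $l$ (resp.\ $r$) whenever $l$ (resp.\ $r$) exists, so the $\ominus$ operations are well defined by construction; whether \refalg{find} returns exactly these nodes is a correctness question about that algorithm, not about this statement.

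The one genuine error is your parenthetical claim that ``no cycle is introduced at this stage'' and that the subsequent WE-operations merely repair a degree violation at $e$. That is false and contradicts the paper's own development: if $eu$ and $ev$ lie in the same tree of $\binforest$, attaching $e$ to both of them closes a cycle through their lowest common ancestor --- this is precisely why the paper follows this lemma with the cycle-elimination machinery (\refdef{we_operator}, \reflem{we}, \reflem{old}), and its running example shows the insertion of $e_4$ creating the cycle $\{e_4,e_8,e_9,e_{12},e_6\}$. Fortunately this slip does not invalidate your proof of the two properties the lemma actually asserts, since EC-equivalence and edge-connectedness are statements about connected components, which the presence of a cycle does not alter; but the remark should be removed or corrected, and you should be explicit that the object produced by the lemma's operations is an EC-equivalent, edge-connected graph that may contain a cycle and only becomes a valid ECB-Forest after the WE-based elimination step.
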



\begin{proof}
Adding edges $(e, l)$, $(e, r)$, $(e, eu)$, and $(e, ev)$ integrates $e$ into $\binforest$, connecting it to the forest structure. Removing $(l, eu)$ and $(r, ev)$ detaches prior connections, positioning $e$ as a bridge between sub-trees previously linked through $eu$ and $ev$. For any $ts \leq te$, the adjusted connections ensure that all edges in the same component as $e$ remain reachable, preserving the connected components of $G$. For any pair of edges in the same component in $G$’s projection, the updated paths through $e$ ensure their nodes remain connected in $\binforest$, as $e$ preserves the reachability previously provided by $(l, eu)$ and $(r, ev)$.
\end{proof}

\refalg{find} presents the procedure for identifying the insertion points $l$, $r$, $eu$, and $ev$. We first initialize these insertion points by examining the neighbors of $e$ in $\binforest$ (lines 1-4). Specifically, we search for the highest-ranked edge with a rank lower than $e$ to obtain initial candidates for $l$ and $r$. Similarly, we search for the lowest-ranked edge with a rank higher than $e$ as initial candidates for $eu$ and $ev$. After obtaining these initial values, we traverse upward along the parent chain of $l$, updating its parent to the highest-ranked ancestor whose rank remains lower than $e$ (lines 6--7). This iteration terminates once the final value of $l$ is determined. We then set $eu$ as the parent of the finalized $l$ (lines 8--9). A similar procedure is applied to obtain $r$ and $ev$.

\begin{lemma}
The time complexity of \refalg{find} is $O(h)$, where $h$ is the depth of the ECB-Forest.
\end{lemma}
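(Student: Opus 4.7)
The plan is to decompose $\algfind$ into an initialization phase (lines 1--4) and a tree-ascent phase (lines 5--10), and bound each by $O(h)$ so that the total is $O(h)$ as claimed. Both phases operate on pointers within $\binforest$, so no global scan of the graph is ever performed; the whole argument is structural.

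For lines 1--4, I would rely on the assumption that each vertex maintains an auxiliary ordered structure (for instance, a balanced BST or a sorted list keyed by core time) over its currently-incident forest edges. Under this assumption, locating the highest-ranked incident edge whose core time is strictly below $\ct(e)$ and, symmetrically, the lowest-ranked incident edge whose core time is strictly above $\ct(e)$ amounts to one predecessor query and one successor query per endpoint of $e$. Each of those queries is $O(\log n)$, and since any non-trivial ECB-Forest with $O(n)$ nodes has depth $h = \Omega(\log n)$, the initialization cost is absorbed into $O(h)$.

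For lines 5--9, the while loop walks strictly upward along parent pointers starting from $l$: at each iteration $l$ is overwritten with $l.parent$, and by acyclicity the path cannot revisit a node. The loop terminates as soon as either the current $l$ has no parent or its parent's core time is at least $\ct(e)$. Hence the number of iterations is bounded by the length of the path from the initial $l$ to the root of its tree, which is at most $h$. Lines 8--9 then perform a single constant-time comparison to update $eu$. Line 10 repeats exactly the same procedure for $r$ and $ev$, contributing another $O(h)$. Summing over the two phases gives $O(h)+O(h)=O(h)$.

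The main obstacle I expect is justifying that the initialization step does not dominate. A purist reading of lines 1--4 could instead interpret them as naive scans over all incident edges of $u$ and $v$ in $\binforest$, which would cost $O(\deg_{\binforest}(u)+\deg_{\binforest}(v))$ and could exceed $h$. The cleanest way out is to make the auxiliary per-vertex index explicit in the data structure, so the predecessor/successor queries are logarithmic; then $O(\log n)$ is silently absorbed by $O(h)$. The ascent phase itself is essentially immediate once one observes that the loop monotonically increases depth and terminates at a well-defined stopping condition.
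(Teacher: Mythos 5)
Your ascent-phase analysis (lines 5--10) is exactly the paper's argument: the loop follows parent pointers monotonically upward, so it performs at most $h$ iterations, and the symmetric pass for $r$, $ev$ adds another $O(h)$. The divergence is in lines 1--4. The paper simply asserts that the initialization is constant time (implicitly, because the needed incident-edge candidates are maintained directly during the incremental construction), whereas you introduce a per-vertex ordered structure and then try to absorb the resulting $O(\log n)$ predecessor/successor cost into $O(h)$ via the claim that ``any non-trivial ECB-Forest with $O(n)$ nodes has depth $h=\Omega(\log n)$.'' That claim is false for a \emph{forest}: $h$ is the maximum node-to-root distance over all trees, and a graph whose temporal $k$-cores split into many small components yields many tiny trees, so $h$ can be $O(1)$ while $n$ is arbitrarily large. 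As written, your argument only establishes an $O(h+\log n)$ bound, which is weaker than the stated $O(h)$.

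The gap is repairable without changing your data-structure assumption, but it needs a different absorption argument. The candidates examined in lines 1--4 are the forest nodes whose underlying graph edges are incident to $u$ (resp.\ $v$); since they all share the vertex $u$, they are connected in the CT-MSF and hence lie in a \emph{single} binary tree of $\binforest$. A binary tree of depth at most $h$ has at most $2^{h+1}-1$ nodes, so the number of candidates is at most $2^{h+1}-1$ and the predecessor/successor query costs $O\bigl(\log(2^{h+1})\bigr)=O(h)$, not merely $O(\log n)$. With that substitution your decomposition gives the claimed $O(h)$ bound. Alternatively, you could follow the paper and argue that the four initial candidates are available in $O(1)$ time from bookkeeping maintained as edges are inserted, in which case no auxiliary search structure (and no absorption argument) is needed at all.
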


\begin{proof}
The initialization of the insertion points can be performed in constant time. Lines 5–9 may involve traversing upward in the tree, which incurs a time complexity of $O(h)$.
Thus, the overall time complexity is $O(h)$.
\end{proof}

\begin{example}
At start time $ts = 2$, \reffig{step_1} illustrates the process of inserting a new edge $e_{4} = (v_1, v_3, 4)$ (with $lst = 2$) into the ECB-Forest $\binforest_3$. Following our insertion rule, we identify $l = e_2$, $eu = e_8$, and $ev = e_6$. We first connect $e_2$ to $e_4$ as its new parent, and disconnect $e_2$ from its previous parent $e_9$. We then connect $e_{4}$ to both $e_8$ and $e_6$. The resulting forest, denoted as $F_1 = \binforest_3 \oplus (e_2, e_4) \oplus (e_{10}, e_8) \oplus (e_{10}, e_6) \ominus (e_2, e_9)$, is \eeq-equivalent and edge-connected with respect to $G$.
\end{example}


\stitle{Cycle Elimination.} As discussed earlier, adding a new node via \reflem{addedge} derive an EC-equivalent graph but produce a cycle.
We next introduce a set of operators that break the newly formed cycles and restore the structure to a binary forest after a new node is inserted, thus resolving the second challenge outlined above.

\begin{definition}[WE Operator]
\label{def:we_operator}
Given an edge-centric graph $F$ of $G$, and an arbitrary wedge (two-hop path) $\la (ex,ez), (ez,ey) \ra$ with $\ct(ex) > \ct(ey) > \ct(ez)$, the WE (\underline{W}edge \underline{E}quivalence) operator updates the wedge to $\la (ex,ey), (ey,ez) \ra$, i.e., $F \oplus (ex,ey) \ominus (ex,ez)$.
\end{definition}

\begin{lemma}
\label{lem:we}
Given an edge-centric \eeq-equivalent edge-connected graph $F$ of $G$, $F$ is still \eeq-equivalent and edge-connected after executing the WE Operator on an arbitrary wedge in $F$.
\end{lemma}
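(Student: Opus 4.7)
The plan is to prove both preservation claims by a local case analysis on the query end time $te$, using the three core times $\ct(ez)<\ct(ey)<\ct(ex)$ guaranteed by \refdef{we_operator}. First I would observe that the operator is extremely localized: it removes exactly one $F$-edge, namely $(ex,ez)$, and inserts exactly one $F$-edge, namely $(ex,ey)$; every other $F$-edge, including $(ey,ez)$, is untouched. Hence the operation can only affect the component structure of a projected sub-forest through paths that pass through the triple $\{ex,ey,ez\}$, and it suffices to verify, for every admissible $te$, that the connectivity pattern \emph{among these three nodes} within the projection is identical before and after the swap; external sub-trees attached at $ex$, $ey$, or $ez$ are otherwise unaffected.

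Second I would split on the position of $te$ relative to the three core times. When $te<\ct(ez)$, none of the three nodes lies in the projection and the swap is invisible. When $\ct(ez)\le te<\ct(ey)$, only $ez$ survives and neither modified $F$-edge, nor $(ey,ez)$, has both endpoints present. When $\ct(ey)\le te<\ct(ex)$, the node $ex$ is absent, so the modified $F$-edges $(ex,ez)$ and $(ex,ey)$ both drop out of the projection, while the unchanged $(ey,ez)$ behaves identically on both sides. The only substantive case is $te\ge\ct(ex)$, when all three are present: before the swap the induced sub-forest contains the path $ex-ez-ey$, and after the swap it contains the path $ex-ey-ez$. In both situations $\{ex,ey,ez\}$ forms a single connected block within the projection, so any external forest node previously reachable through one member of the triple remains reachable through the rewired block, yielding a bijection between connected components before and after and hence \eeq-equivalence.

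Edge-connectivity is then immediate from the same analysis: in every case, each pair of forest nodes that was connected in the old projected $F$ remains connected afterward, and no spurious merges or splits are introduced, because all changes are confined to the triple. The main obstacle I anticipate is the fourth case, where one must rule out that turning $ex-ez-ey$ into $ex-ey-ez$ accidentally fuses or separates sub-trees hanging off $ex$, $ey$, or $ez$. I would dispatch this by an explicit path-rerouting argument: any old $F$-path using $(ex,ez)$ can be replaced by the two-edge detour $(ex,ey)$ followed by $(ey,ez)$, and conversely any new $F$-path using $(ex,ey)$ can be replaced by $(ex,ez)$ followed by $(ez,ey)$, both valid under the assumption $te\ge\ct(ex)$, so neither new connections are created nor existing ones are lost.
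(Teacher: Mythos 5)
The paper states this lemma without giving a proof, so there is no published argument to compare against; judged on its own, your proof is correct. The localization observation (only $(ex,ez)$ is removed and $(ex,ey)$ added, with $(ey,ez)$ untouched), the case split on $te$ against $\ct(ez)<\ct(ey)<\ct(ex)$, and the path-rerouting $ex\!-\!ez\!-\!ey \leftrightarrow ex\!-\!ey\!-\!ez$ in the only nontrivial case $te\ge\ct(ex)$ together establish that every projected sub-forest has identical components before and after the swap, which is exactly what \eeq-equivalence and edge-connectivity require.
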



We say a wedge of a node $e$ if $e$ is the common node in the two edges of the wedge. Based on \reflem{we}, our idea is to continuously identify the node with the lowest rank in the cycle and apply the WE operator on the wedge of the node. When a new edge between two nodes is created in the WE operator, we set the node with the higher rank as the parent of the other node. Each execution of WE operator will reduce the cycle length by one, and no new cycle will be produced.

\begin{example}
Refer to \reffig{step_1}, where the newly arrived edge $e_4$ is inserted into $\binforest_3$. This insertion creates a new cycle consisting $\{e_4, e_8, e_9, e_{12}, e_6\}$. We identify the wedge as $(e_4, e_6)$ and $(e_4, e_8)$, and apply a wedge elimination (WE) operation to transform it into $\la (e_4, e_6), (e_6, e_8) \ra$. The result is shown in \reffig{step_2}, where the cycle is shortened to $\{e_6, e_8, e_9, e_{12}\}$.
\end{example}

Before the last execution of the WE operator, the cycle has been reduced to a triangle. Executing the WE operator on the wedge of the lowest-rank node in the cycle will produce a pair of parallel edges. We can simply keep one edge, and the cycle is eliminated now.

A byproduct of the process of cycle elimination is to identify the expired node in the ECB-Forest. Recall that the insertion of the new forest node (MSF edge) will exclude an old forest node from the ECB-Forest. We identify the expired node as follows.

\begin{lemma}
\label{lem:old}
Given $e$, $eu$, $ev$ defined in \reflem{addedge}, the lowest common ancestor (LCA) of $eu$ and $ev$ is not in the MSF for the new start time $ts - 1$.
\end{lemma}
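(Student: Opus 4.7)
My plan is to derive the lemma from the classical MSF exchange property: when $e$ is inserted into the CT-MSF for start time $ts-1$, the unique cycle it creates is $e$ together with the old $u$-to-$v$ MSF path, and the heaviest edge on that cycle must be expelled. I will therefore argue that $c := \mathrm{LCA}_{\binforest}(eu, ev)$ is precisely that heaviest edge. If $u$ and $v$ lie in different MSF trees then adding $e$ creates no cycle, no edge is expelled, and the lemma is vacuous; in the rest of the sketch I assume $eu$ and $ev$ share a tree in $\binforest$ so that $c$ is well defined.

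The first and most delicate step is to place $c$ on the MSF $u$-to-$v$ path. Write $c = (a, b)$ and, without loss of generality, assume $eu$ sits in the subtree of $c$'s child attached via endpoint $a$ while $ev$ sits below the child attached via $b$. By \refdef{forest}, every child relation ``$l$ is a child of $x$ at endpoint $w$'' forces $l$ to lie in the same connected component as $w$ inside the rank-restricted subgraph $\{e' \in \msf_{ts} : \ct(e') < \ct(x)\}$. Iterating this link-by-link down the $c$-to-$eu$ path in $\binforest$, and using that every intermediate core time lies strictly below $\ct(c)$, we obtain a chain of MSF edges, each of core time less than $\ct(c)$, that connects $a$ to an endpoint of $eu$; since $u$ is an endpoint of $eu$, this yields a walk from $u$ to $a$ in the MSF whose every edge has core time below $\ct(c)$. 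A symmetric argument produces a $v$-to-$b$ walk with the same bound. Concatenating these two walks with the edge $c$ gives a $u$-to-$v$ walk in the MSF, which coincides with the unique MSF $u$-to-$v$ path because the MSF is acyclic; hence $c$ lies on that path.

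The second step is a weight comparison. By the construction above, every edge on the MSF $u$-to-$v$ path other than $c$ has core time strictly less than $\ct(c)$. Furthermore, the definition of the insertion points gives $\ct(e) < \ct(eu) \le \ct(c)$, so $e$ itself is also strictly lighter than $c$. Hence $c$ attains the strict maximum weight on the unique cycle formed by inserting $e$, and the MSF exchange rule removes $c$, i.e., $c \notin \msf_{ts-1}$.

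The step I expect to be the main obstacle is the first: translating ``$eu$ lies in the $a$-side subtree of $c$'' into ``$u$ is MSF-connected to $a$ via edges of core time strictly below $\ct(c)$.'' This requires an inductive unfolding of \refdef{forest}, since the parent/child relation there is phrased via connectivity in a rank-restricted subgraph rather than direct adjacency. The degenerate cases $eu = c$ or $ev = c$ need brief separate treatment, but they reduce to truncated versions of the same argument; once the subtree-to-connectivity correspondence is established, the remainder is a direct application of the MSF exchange property.
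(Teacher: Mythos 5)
Your overall route---show that $c=\mathrm{LCA}(eu,ev)$ is the strict maximum on the cycle created by inserting $e$ into $\msf_{ts}$ and then invoke the cycle/exchange rule---is exactly the argument the paper leaves implicit (the lemma is stated without proof, the text merely asserting that the LCA is the highest-ranked node on the cycle), and your plan to unfold \refdef{forest} into rank-restricted connectivity is the right tool. One step, however, is concretely false: $u$ is in general \emph{not} an endpoint of $eu$. By the paper's definition of the insertion points, when the left child $l$ exists, $eu$ is the \emph{parent of $l$} in $\binforest_{ts}$, and a parent--child pair in an ECB-forest need not share a vertex; in the paper's own running example (inserting $(v_3,v_8,2)$ into $\binforest_3$) the $u$-side insertion point is reported as $e_8=(v_4,v_5)$, which is not incident to $v_3$. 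The conclusion you need still holds, but the walk must be routed through $l$ rather than through $eu$: the component of $u$ in $\{e'\in\msf_{ts}\mid \ct(e')_{ts}<\ct(e)_{ts}\}$ contains $l$, all its edges have core time below $\ct(c)_{ts}$, and since $l$ is a descendant of $c$ this whole component sits inside the $a$-side component below $\ct(c)_{ts}$, giving $u\sim a$ with no incidence assumption. Relatedly, ``coincides with the unique path'' is off---the walk merely \emph{contains} the path; what forces $c$ onto the path is that $a$ and $b$ cannot be joined by MSF edges other than $c$ because $\msf_{ts}$ is acyclic.

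The more serious gap is the dismissal of the degenerate cases as ``truncated versions of the same argument.'' If $u$ and $v$ are already connected in $\{e'\in\msf_{ts}\mid \ct(e')_{ts}<\ct(e)_{ts}\}$, then $l=r$, the insertion points collapse to $eu=ev$, the ``LCA'' is $eu$ itself, and it does \emph{not} lie on the $u$--$v$ path: here the newly inserted $e$ is the highest-ranked edge of the vertex-level cycle, it is $e$ that is excluded from $\msf_{ts-1}$, and the LCA survives. This is precisely the second outcome the paper itself lists after \reflem{addedge} (``if $e$ itself is the highest-ranked edge in the cycle\ldots''), so your sketch, which would prove expulsion of the LCA unconditionally, proves too much in that case. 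The proof needs the explicit hypothesis that $eu\neq ev$ and that they fall in distinct child subtrees of their LCA (equivalently, that a genuine cycle through existing forest nodes is formed by the insertion); under that hypothesis, and with the repair above, your two-step argument (placement of $c$ on the path, then $\ct(e)_{ts}<\ct(eu)_{ts}\le\ct(c)_{ts}$ plus the path bound to get strict maximality) goes through and matches the paper's intended reasoning.
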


Let $elca$ be the LCA of $eu$ and $ev$ mentioned in \reflem{old}. It is easy to see that $elca$ is the node with the highest rank in the cycle.
In each execution of the WE operator, the node with the highest rank is excluded from the cycle. Consequently, we can locate $elca$ as the last visited node when eliminating the cycle (i.e., the terminal node of the parallel edges with the earlier timestamp).
To delete $elca$ from the updated ECB-Forest while bound the number of children of each forest node, we can simply connect the child of $elca$ to its parent because $elca$ only has one child after the execution of all WE operators. 

\begin{example}
In \reffig{step_4}, we obtained the edge-centric graph $F_4$ after applying three WE-operations.
At this point, a cycle consisting of two nodes, $\{e_9, e_{12}\}$, remains, which also corresponds to a pair of parallel edges. We identify $e_{12}$ as the lowest common ancestor (LCA) of $e_8$ and $e_6$, and remove $e_{12}$ to restore the ECB-Forest for $ts = 2$.
\end{example}

\begin{algorithm}[t]
\caption{\algconstruct}
\label{alg:core_construct}
\KwIn{$G$, $k$}
\KwOut{The PECB-Index for $k$-core CC}
    compute the edge core times for $e \in E$\;

    \For {$t_{max} \geq ts \geq 1$} {
        \ForEach{$x(u,v)$: $\ct(x)_{ts} \neq \ct(x)_{ts+1}$} {
            $l, r, eu, ev \gets \algfind(x)$\;
            $l.parent \gets x$\;
            $r.parent \gets x$\;
            Update $\binforest$ based on lines 6-7\;
            $\algmerge(x,eu,ev)$\;
        }
    }
    
    \proc{$\algmerge(e,eu,ev)$}{
    \If{$eu = ev$}{
        \If{$eu \neq \kwnull$}{
            $e.parent \gets eu.parent$\;
            delete $eu$\;
        }
        \Return
    }
    \eIf{$eu = \kwnull \lor \ct(eu) > \ct(ev)$}{
        $e.parent \gets ev$\;
        $\algmerge(ev,ev.parent,eu)$\;

    }{
        $e.parent \gets eu$\;
        $\algmerge(eu,eu.parent,ev)$\;
    }
}
\end{algorithm}

\stitle{The Final Algorithm.}
\refalg{core_construct} constructs the PECB-Index for a temporal graph. It begins with an empty ECB-Forest and processes all edges in decreasing order of start times, incrementally building and updating the forest for each new start time $ts$. The process starts at $ts = t_{max}$ and continues until all edges in $G$ have been processed. For each edge $e$, the algorithm updates the ECB-Forest corresponding to $ts = lst(e)$, ensuring that the index reflects the temporal graph's connectivity up to that point. 

We first compute the edge core times (line 1). Then, we decrease the start time from $t_{max}$ to $1$. For each start time $ts$, we process all edges with $lst = 2$. The insertion points for each edge (forest node) $e$ are determined using \refalg{find} (line 4). We first connect $e$ to $l$ and $r$, establishing its connections to lower-ranked edges in the forest structure (lines 5–6). We then update the PECB-Index by applying the operations specified in \reflem{addedge} (line 7). Finally, we invoke the $\algmerge$ procedure to connect $e$ to higher-ranked parts of the forest and eliminate potential cycles (line 8). The Merge procedure handles cycle elimination by employing the WE-operation (lines 15-20), which identifies the node with the highest rank within the cycle (the LCA edge) and removes it. The node to be removed is identified as the endpoint of parallel tree edges (lines 10-13); by removing the LCA node, it breaks the cycle while maintaining the forest's connectivity equivalence. Once these structural adjustments are made, the algorithm records labels for the affected nodes in the index. \reflem{addedge} and \reflem{we} already guarantees the resulting structure is \eeq-equivalent and edge-connected. 

\begin{theorem}
The time complexity of \refalg{core_construct} is $O(vct \cdot d + |E_{ct}| \cdot h)$, where $vct \cdot d$ corresponds to the cost of computing the core-time graph, $|E_{ct}|$ is the number of distinct edge core time instances (i.e., the number of forest nodes), and $h$ is the depth of the ECB-Forest.
\end{theorem}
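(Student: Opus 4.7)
The plan is to decompose the runtime of \refalg{core_construct} into two additive parts that match the two terms of the claimed bound. The first term $O(vct \cdot d)$ is charged entirely to line~1: edge core times are produced as a byproduct of the vertex-core-time algorithm of \cite{DBLP:journals/pvldb/YuWQ00021}, which is explicitly discussed at the beginning of \refsec{construct} to run in $O(vct \cdot d)$ time. The second term $O(|E_{ct}| \cdot h)$ must be obtained by showing that the body of the double loop (lines~2--8) is executed exactly $|E_{ct}|$ times and that each iteration costs $O(h)$; note that iterating $ts$ itself contributes no extra cost because we only perform work for those pairs $\la x, ts \ra$ at which the edge core time changes, and the sum of such pairs over all start times is by definition $|E_{ct}|$.

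To establish the per-iteration $O(h)$ bound, I would split the work inside the loop body into three pieces. Piece~(i): the call to \algfind on line~4, which has already been shown to run in $O(h)$ time. Piece~(ii): the pointer assignments on lines~5--6 and the structural updates on line~7 implementing \reflem{addedge}, all of which are a constant number of edge insertions and deletions and therefore $O(1)$. Piece~(iii): the recursive \algmerge procedure on line~8, which is where the main argument lies.

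The key step is to bound the recursion depth of \algmerge by $O(h)$. After line~7, by \reflem{addedge} the only cycle introduced passes through the newly inserted $x$ and then along the tree path between $eu$ and $ev$ in $\binforest$; its length is at most the sum of the depths of $eu$ and $ev$ in the ECB-Forest, hence $O(h)$. Each recursive call of \algmerge performs only a constant number of parent-pointer writes and one WE operation (\refdef{we_operator}) on the current wedge at the frontier of the cycle; by \reflem{we} equivalence and edge-connectivity are preserved, and by the rank-based pivot on the smaller of $\ct(eu),\ct(ev)$ the cycle is strictly shortened by one edge per call. The base case $eu=ev$ corresponds to the pair of parallel edges created in the last step and is handled in $O(1)$ by deleting the LCA node identified per \reflem{old}. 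Thus \algmerge terminates after $O(h)$ recursive calls each doing $O(1)$ work, giving $O(h)$ for piece~(iii).

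Combining pieces~(i)--(iii) yields $O(h)$ per insertion, and multiplying by the $|E_{ct}|$ insertions, then adding the $O(vct \cdot d)$ preprocessing cost, gives the claimed bound. The main obstacle I expect is piece~(iii): one has to argue carefully that the cycle length remains $O(h)$ throughout the recursion and that the pivot rule in \algmerge indeed shortens the cycle monotonically without ever creating a new one, so that the recursion depth does not blow up beyond the initial cycle length. All other parts are either invocations of lemmas already proved or straightforward constant-time pointer manipulations.
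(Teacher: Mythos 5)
Your proposal is correct and follows essentially the same decomposition as the paper's proof: charge $O(vct \cdot d)$ to the core-time computation, observe that the loop body runs once per forest node ($|E_{ct}|$ times), and bound each iteration by $O(h)$ via \algfind plus the \algmerge recursion. The only difference is that you spell out why \algmerge is $O(h)$ (cycle length bounded by the tree path between $eu$ and $ev$, shortened by one per WE step), a detail the paper's proof simply asserts.
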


\begin{proof}
Computing the core-time graph takes $O(vct \cdot d)$.  
We then process $|E_{ct}|$ forest nodes, where processing each node requires $O(h)$ to find the insertion points (line 4) and for the merge procedure (line 8).  
Therefore, the overall time complexity is $O(vct \cdot d + |E_{ct}| \cdot h)$.
\end{proof}

\section{Experiments}
\label{sec:experiments}

\begin{table}[t]
  
  \resizebox{\linewidth}{!}{
  \begin{tabular}{lrrrrr}
    \toprule
    Name & $|V|$ & $|E|$ & $t_{max}$ & $k_{max}$ & $day$\\
    \midrule
    FB-Forum (FB) & 899 & 33,786 & 33,482 & 19 & 164\\
    BitcoinOtc (BO) & 5,881 & 35,592 & 35,444 & 21 & 1903\\
    CollegeMsg (CM) & 1,899 & 59,835 & 58,911 & 20 & 193\\
    Email (EM) & 986 & 332,334 & 207,880 & 34 & 803\\
    Mooc (MC) & 7,143 & 411,749 & 345,600 & 76 & 29\\
    MathOverflow (MO) & 24,818 & 506,550 & 505,784 & 78 & 2350\\
    AskUbuntu (AU) & 159,316 & 964,437 & 960,866 & 48 & 2613\\
    Lkml-reply (LR) & 63,399 & 1,096,440 & 881,701 & 91 & 2921\\
    Enron (ER) & 87,273 & 1,148,072 & 220,364 & 53 & 16217\\
    SuperUser (SU) & 194,085 & 1,443,339 & 1,437,199 & 61 & 2773\\
    WikiTalk (WT) & 1,219,241 & 2,284,546 & 1,956,001 & 68 & 4762\\
    Wikipedia (WK) & 91,340 & 2,435,731 & 4,518 & 117 & 5077\\
    ProsperLoans (PL) & 89,269 & 3,394,979 & 1,259 & 111 & 2142\\
    Youtube (YT) & 3,223,589 & 9,375,374 & 203 & 88 & 225\\
    DBLP (DB) & 1,824,701 & 29,487,744 & 77 & 286 & 29219\\
  \bottomrule
  \end{tabular}
  }
  \caption{Datasets.}
  \label{tab:datasets}
  \vspace{-1em}
\end{table}

We conduct extensive experiments to evaluate the performance of our proposed solutions. All algorithms are implemented in C++ and compiled using the g++ compiler with the -O3 optimization level. Experiments are conducted on a Linux machine equipped with dual Intel Xeon Gold 6342 2.8 GHz CPUs and 512 GB of RAM. We evaluate performance on 15 publicly available real-world temporal graphs from SNAP\footnote{\url{https://snap.stanford.edu/}}, the KONECT\footnote{\url{http://konect.cc/}} project and the Network Repository\footnote{\url{https://networkrepository.com/}}. Detailed statistics of these graphs are summarized in \reftab{datasets}.

\begin{figure*}[t]
\centering
\includegraphics[width=\textwidth]{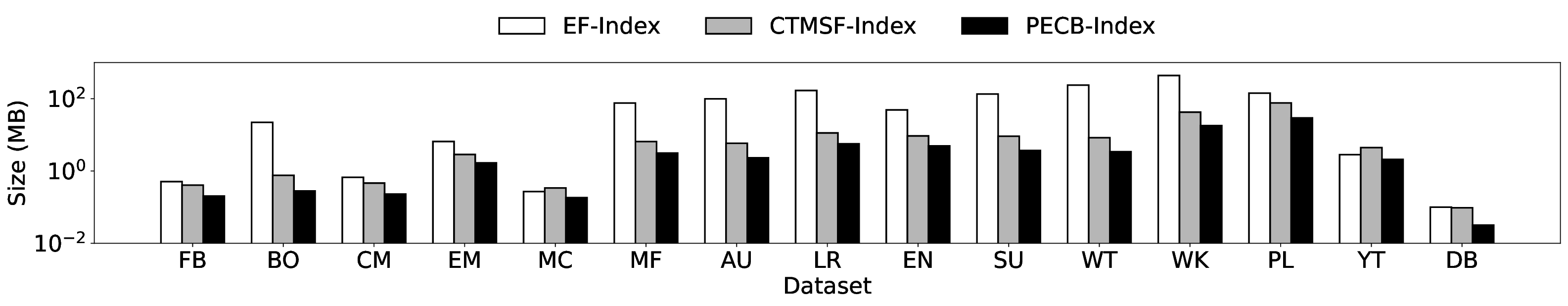}
\vspace{-2em}
\caption{Index space for datasets with timestamps by days.}
\label{fig:space_day}
\vspace{-0.5em}
\end{figure*}

\begin{figure*}[t]
\centering
\includegraphics[width=\textwidth]{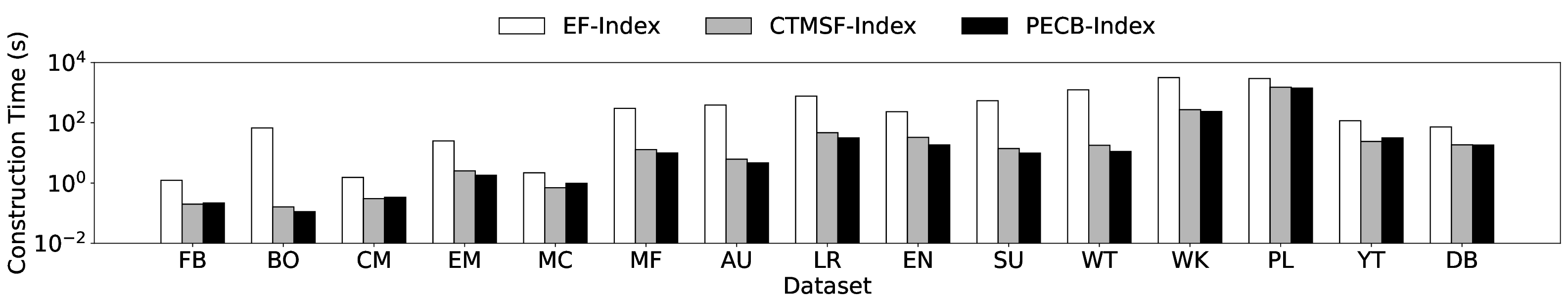}
\vspace{-2em}
\caption{Construction time for datasets with timestamps by days.}
\vspace{-0.5em}
\label{fig:construction_day}

\end{figure*}

\begin{figure*}[t]
\centering
\includegraphics[width=\textwidth]{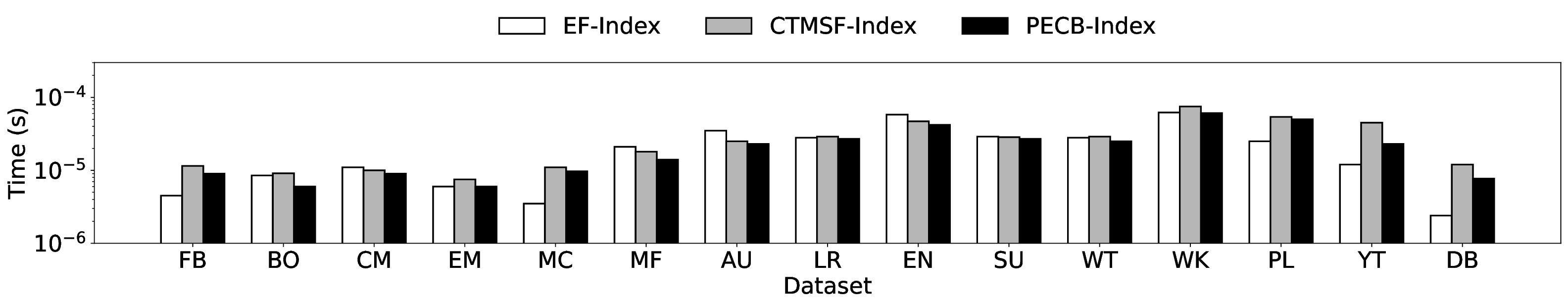}
\vspace{-2em}
\caption{Average query time on datasets with timestamps by days.}
\vspace{-0.5em}
\label{fig:query_day}

\end{figure*}

\stitle{Parameters.} We evaluate the performance of historical $k$-core component search algorithms in terms of query processing time, index construction time, and index size. As in \cite{yang2024}, we first aggregate all edges that fall within the same calendar day and use that day as a single timestamp. Day-level granularity is often more practical, as it reflects the natural reporting cycle of most datasets, and greatly reduces both the number of distinct time points and the size of the index, hence speeding up queries. Nevertheless, to show that our method also works on finer time scales, we also compare the performance os both algorithms on the dataset with original timestamps.  Following \cite{yang2024}, we set the value of $k$ to 50\%, 60\%, 70\%, 80\% and 90\% of $k_{max}$, where $k_{max}$ is the largest valid $k$ value. We set 70\% $k_{max}$ as the default value for $k$. To evaluate query performance, we generate 1,000 queries by randomly selecting vertices and time windows $[ts, te]$. We impose a 24-hour time limit and a 200 GB memory cap; any experiment that exceeds either bound is recorded as a failure to complete.

We compare our proposed PECB-Index with two baselines:
\begin{itemize}
    \item \textbf{EF-Index} \cite{yang2024}: the current state-of-the-art index for historical $k$-core component search. 
    \item \textbf{CTMSF-Index}: our vertex-centric baseline implementation. Instead of applying the edge centric framework, we materialise the CT-MSF (\refdef{ctmsf}) by letting each vertex store the list of incident MSF edges. As in PECB-Index, a vertex writes a new list only when it differs from the previous start time; nonetheless, vertex degree in a CT-MSF is unbounded.
\end{itemize}


\begin{figure*}[htbp]
  \centering
  \begin{subfigure}[b]{0.24\textwidth}
    \centering
    \includegraphics[width=\textwidth]{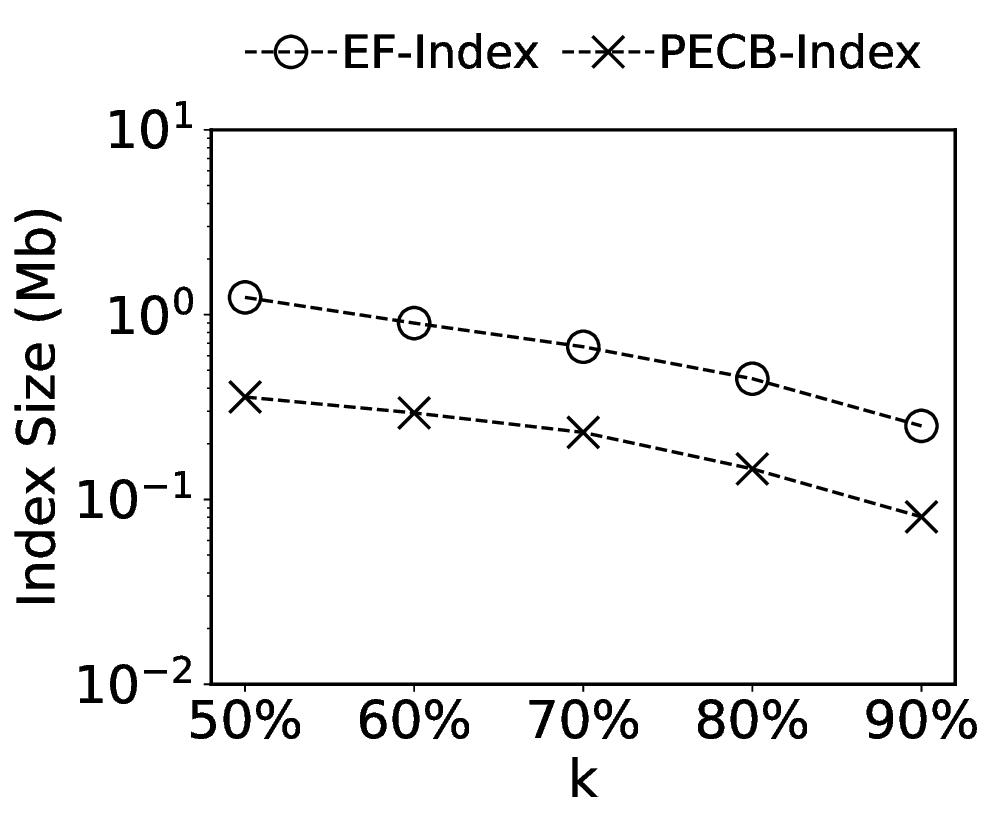}
    \caption{CollegeMsg}
    \label{fig:cm_vary_k_query}
  \end{subfigure}%
  \hfill
  \begin{subfigure}[b]{0.24\textwidth}
    \centering
    \includegraphics[width=\textwidth]{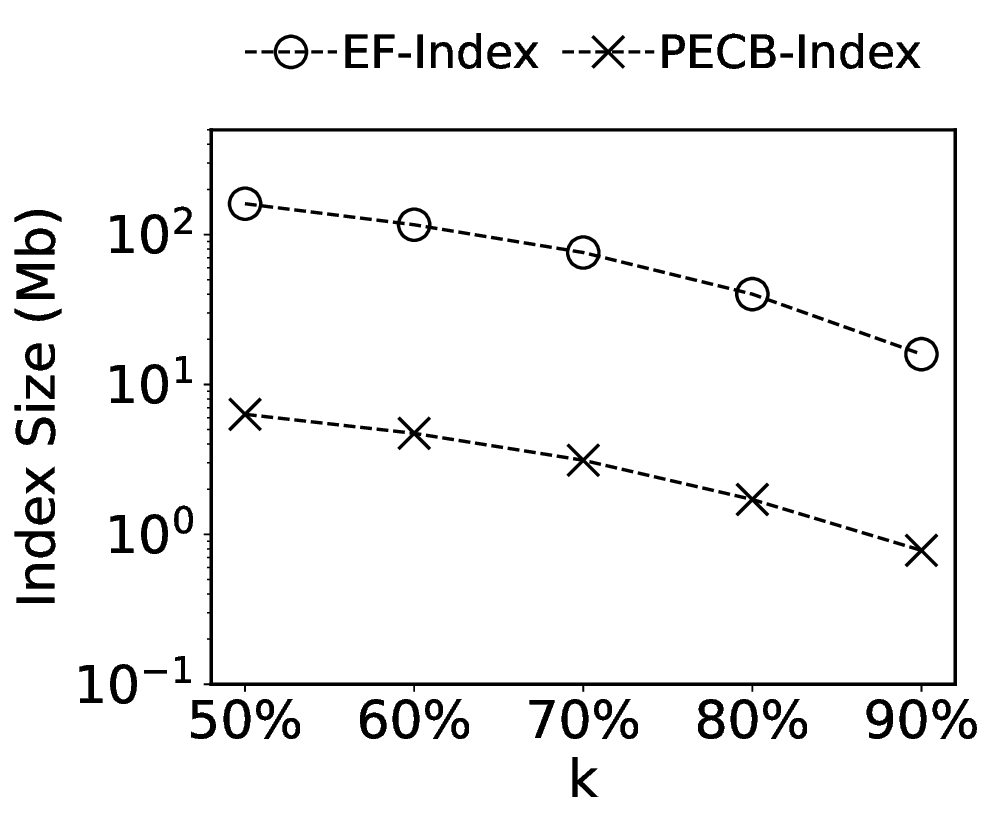}
    \caption{MathOverflow}
    \label{fig:em_vary_k_query}
  \end{subfigure}%
  \hfill
  \begin{subfigure}[b]{0.24\textwidth}
    \centering
    \includegraphics[width=\textwidth]{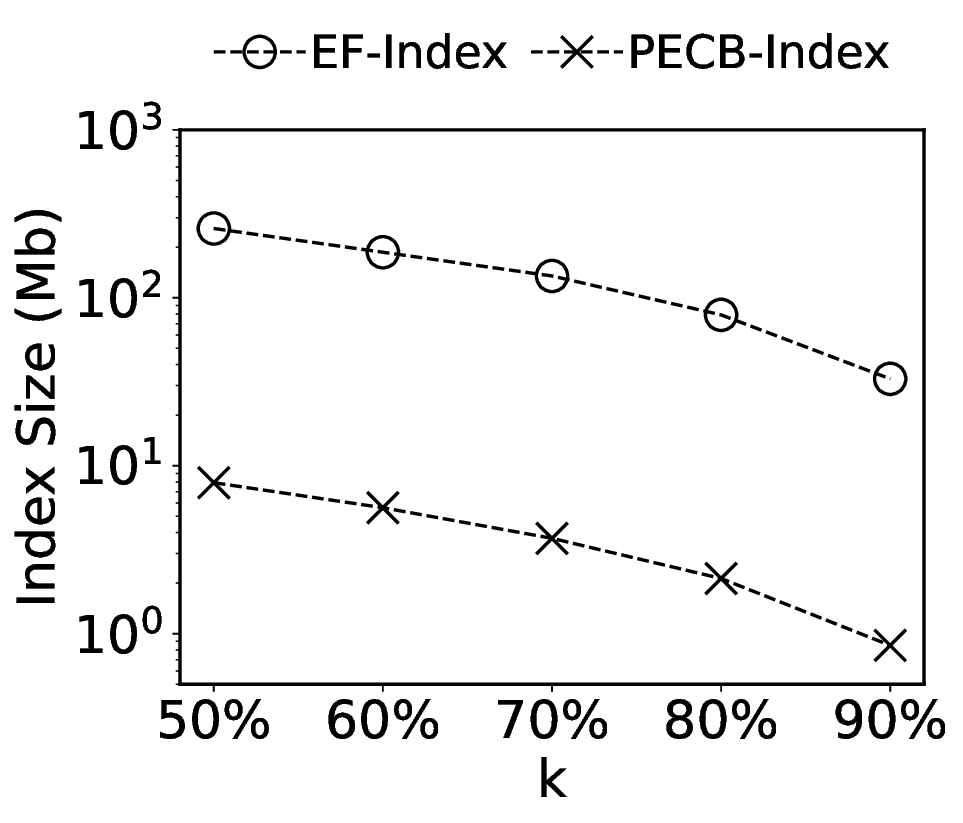}
    \caption{SuperUser}
    \label{fig:mf_vary_k_query}
  \end{subfigure}%
  \hfill
  \begin{subfigure}[b]{0.24\textwidth}
    \centering
    \includegraphics[width=\textwidth]{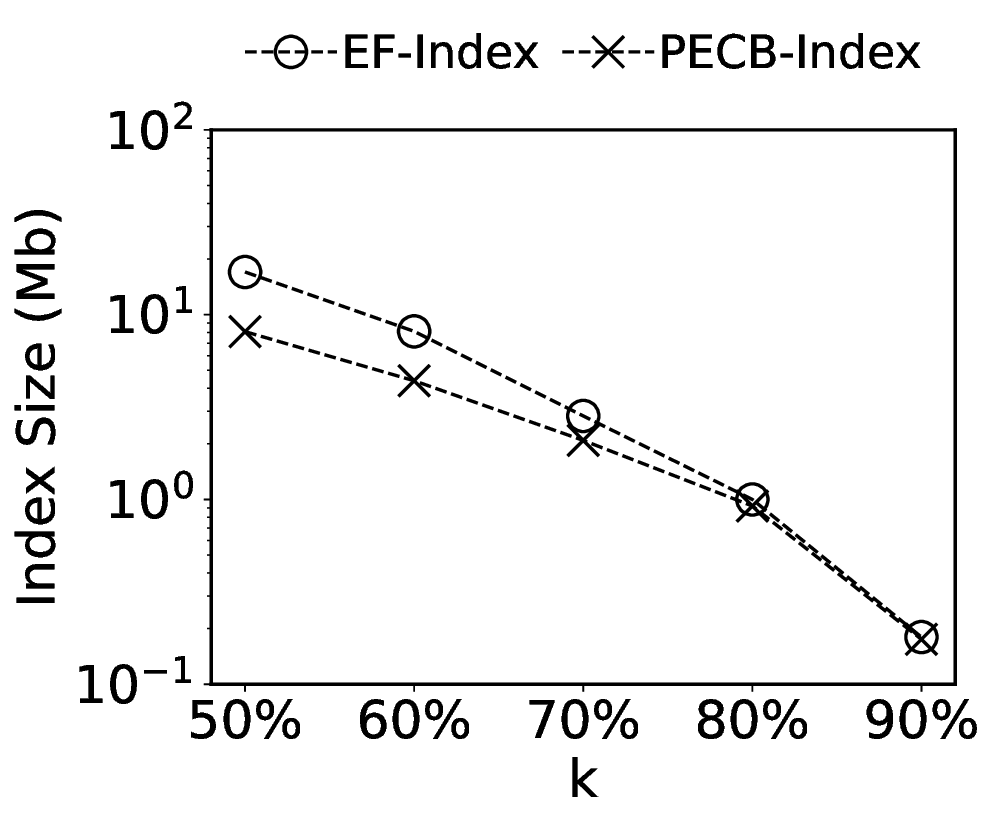}
    \caption{Youtube}
    \label{fig:yt_vary_k_query}
  \end{subfigure}
  \vspace{-1em}
  \caption{Index space for datasets with timestamp grouped by days, varying $k$ between 50\%, 60\%, 70\%, 80\%, and 90\% of $k_{max}$.}
  \label{fig:vary_k_space}
  \vspace{-0.5em}
\end{figure*}

\begin{figure*}[htbp]
  \centering

  \begin{subfigure}[b]{0.24\textwidth}
    \centering
    \includegraphics[width=\textwidth]{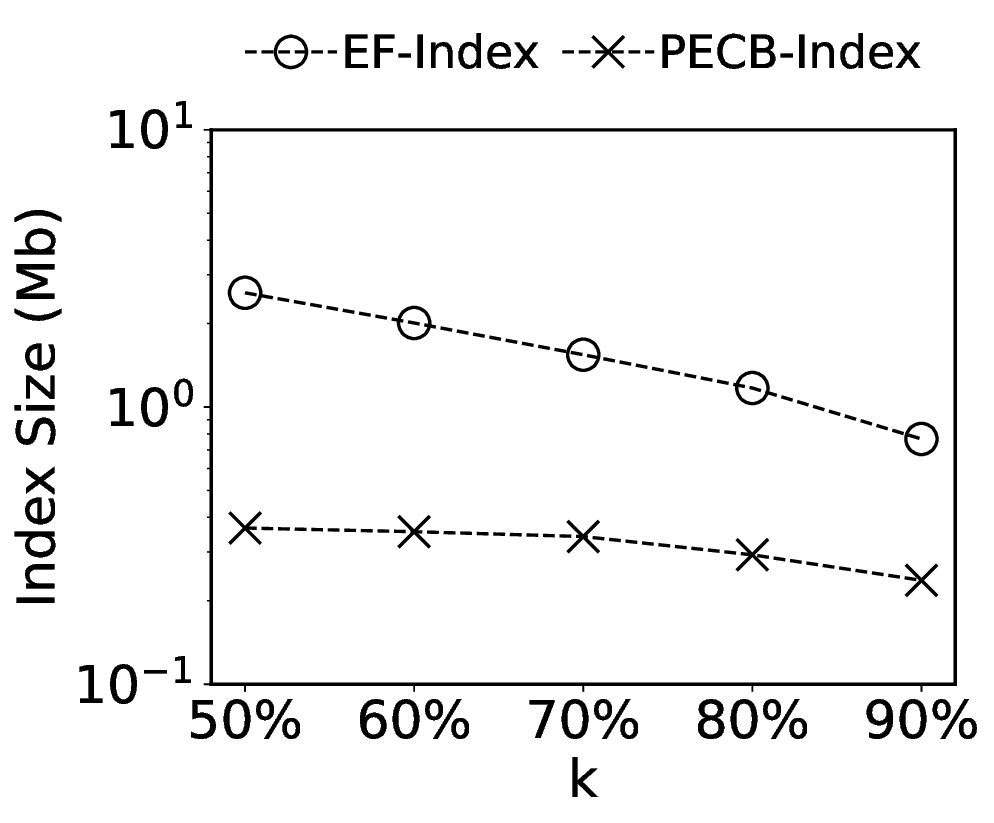}
    \caption{CollegeMsg}
    \label{fig:cm_vary_k_query}
  \end{subfigure}%
  \hfill
  \begin{subfigure}[b]{0.24\textwidth}
    \centering
    \includegraphics[width=\textwidth]{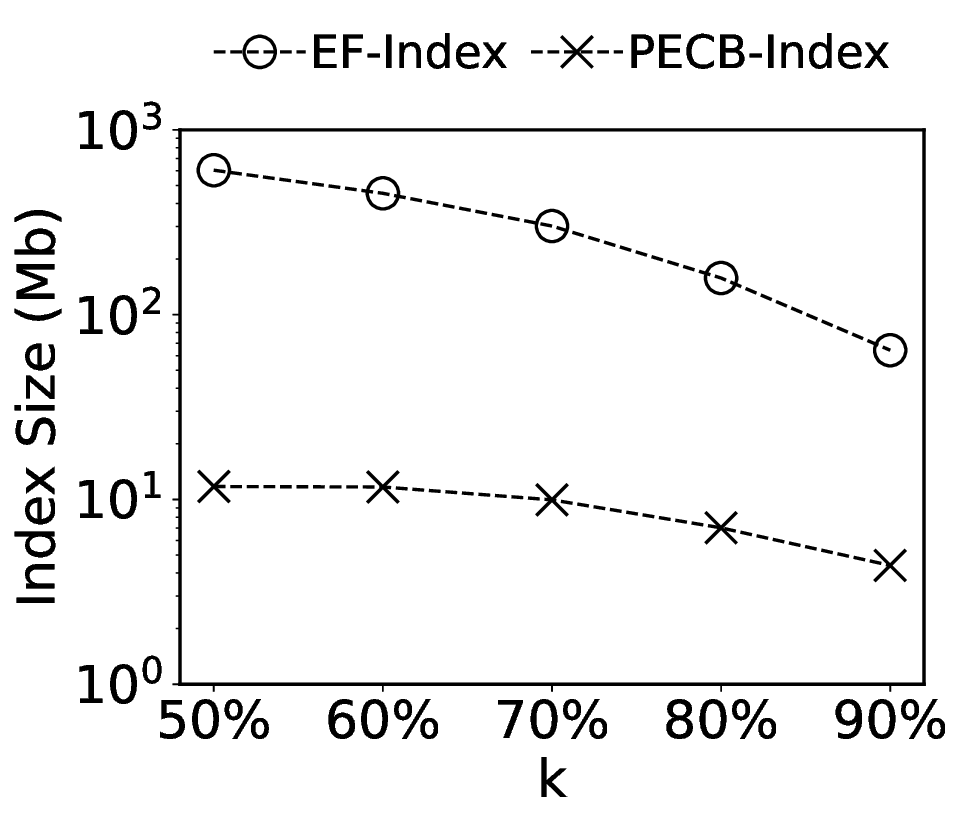}
    \caption{MathOverflow}
    \label{fig:em_vary_k_query}
  \end{subfigure}%
  \hfill
  \begin{subfigure}[b]{0.24\textwidth}
    \centering
    \includegraphics[width=\textwidth]{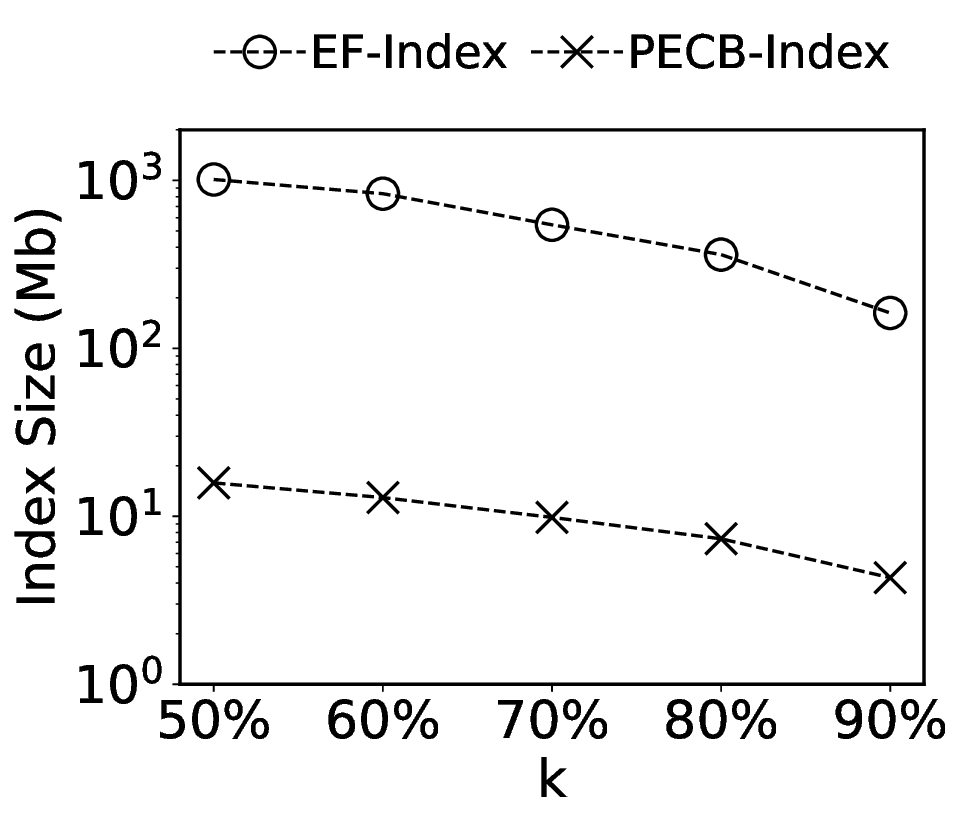}
    \caption{SuperUser}
    \label{fig:mf_vary_k_query}
  \end{subfigure}%
  \hfill
  \begin{subfigure}[b]{0.24\textwidth}
    \centering
    \includegraphics[width=\textwidth]{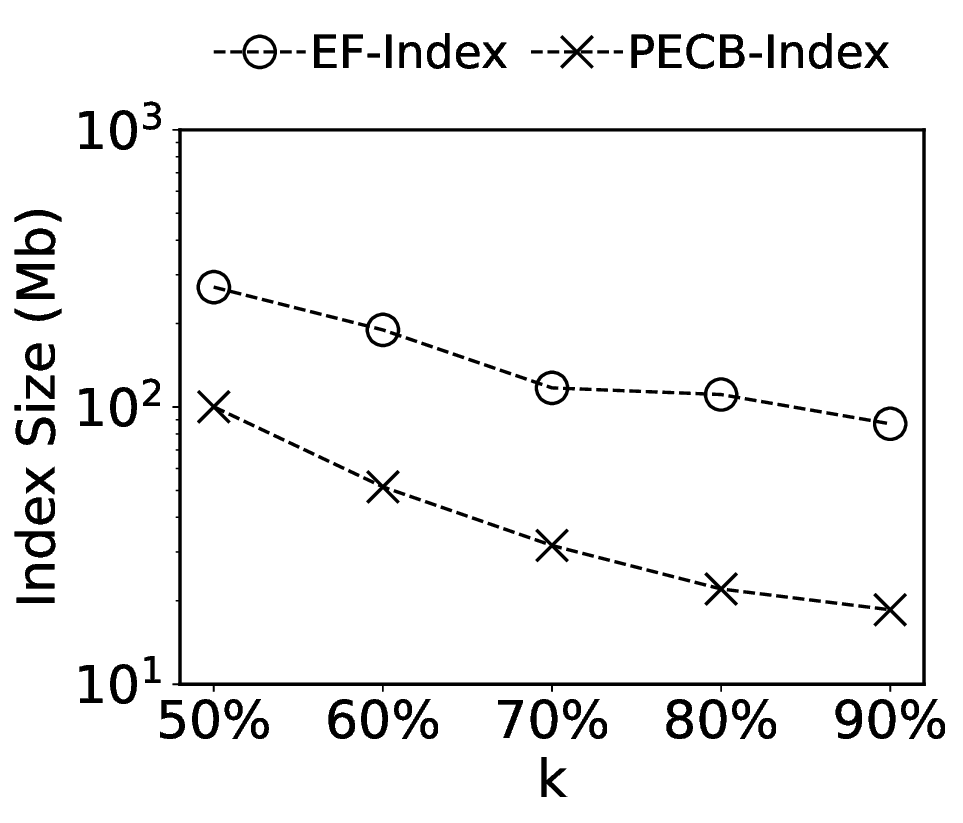}
    \caption{Youtube}
    \label{fig:yt_vary_k_query}
  \end{subfigure}
  \vspace{-1em}
  \caption{Construction time for datasets with timestamp grouped by days, varying $k$ between 50\%, 60\%, 70\%, 80\%, and 90\% of $k_{max}$.}
  \label{fig:vary_k_construction}
  \vspace{-0.5em}
\end{figure*}

\begin{figure*}[htbp]
  \centering

  \begin{subfigure}[b]{0.24\textwidth}
    \centering
    \includegraphics[width=\textwidth]{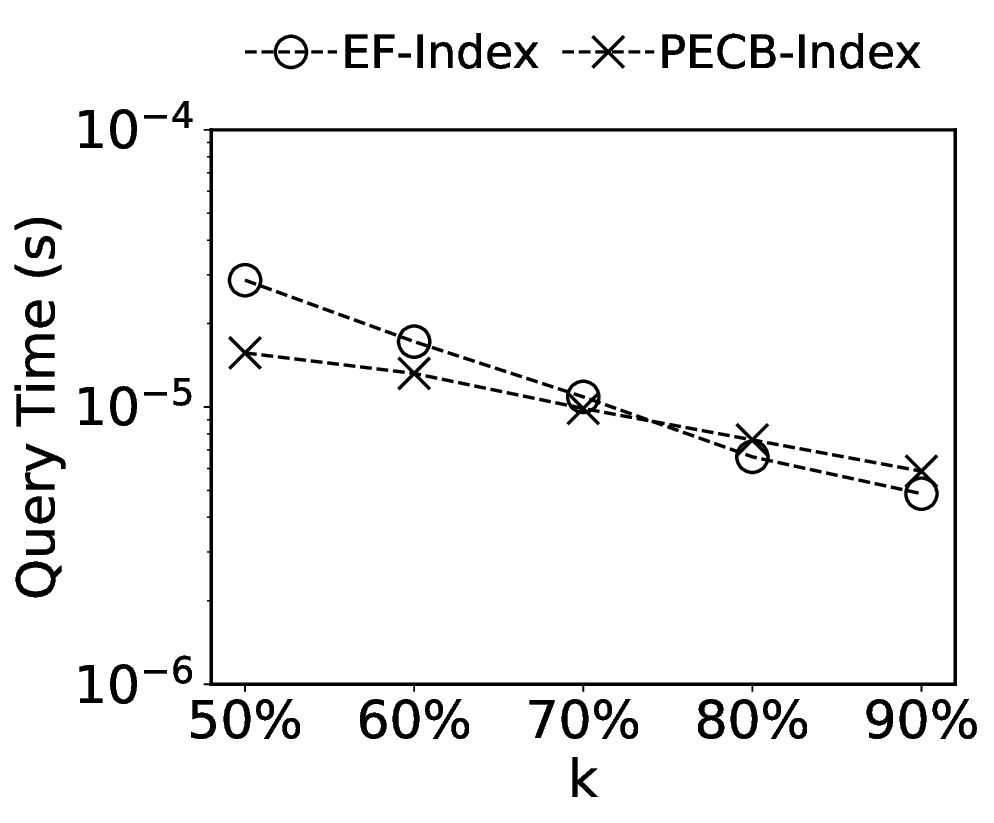}
    \caption{CollegeMsg}
    \label{fig:cm_vary_k_query}
  \end{subfigure}%
  \hfill
  \begin{subfigure}[b]{0.24\textwidth}
    \centering
    \includegraphics[width=\textwidth]{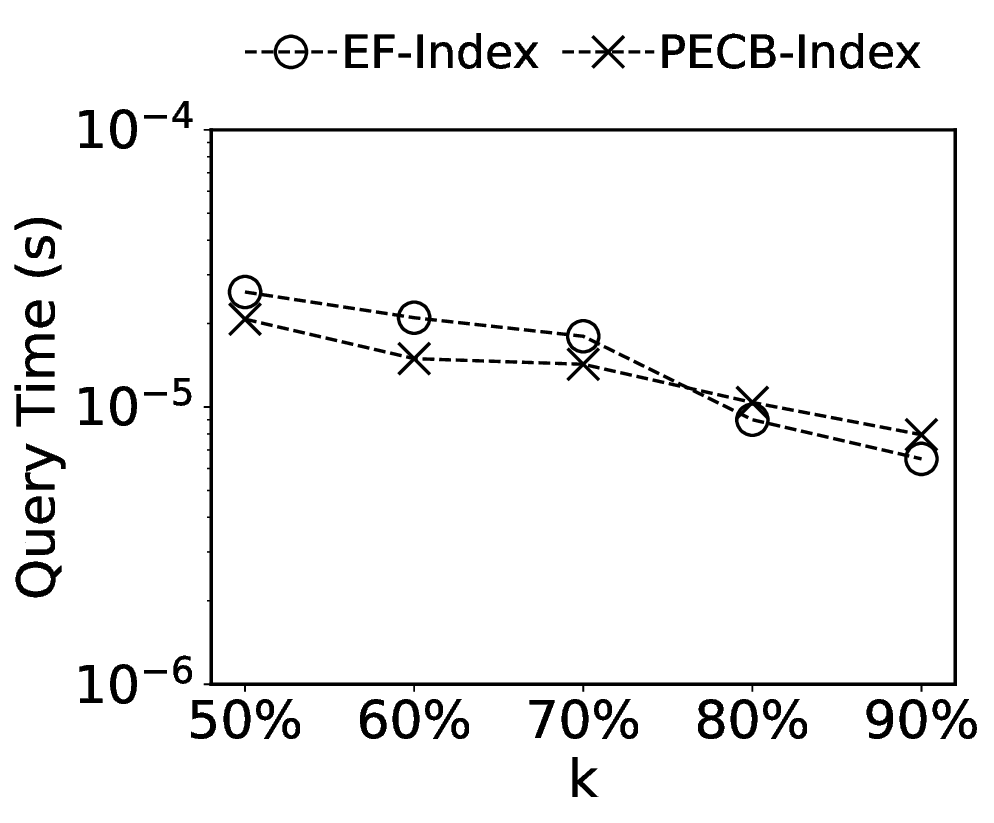}
    \caption{MathOverflow}
    \label{fig:em_vary_k_query}
  \end{subfigure}%
  \hfill
  \begin{subfigure}[b]{0.24\textwidth}
    \centering
    \includegraphics[width=\textwidth]{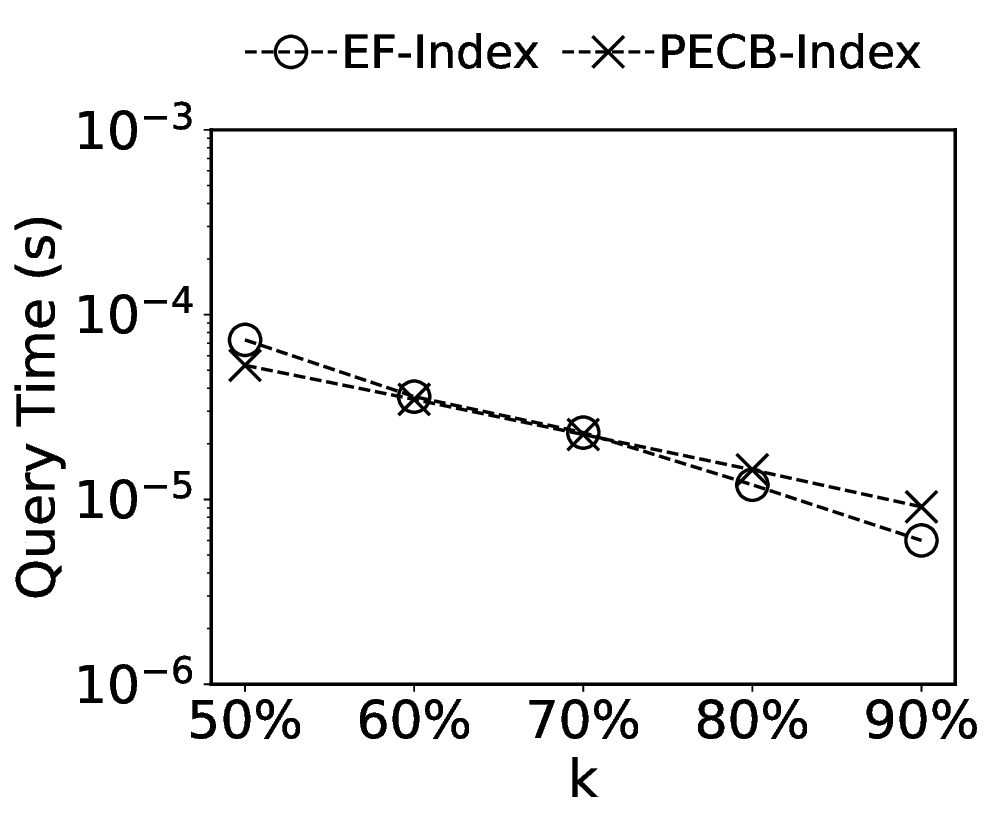}
    \caption{SuperUser}
    \label{fig:mf_vary_k_query}
  \end{subfigure}%
  \hfill
  \begin{subfigure}[b]{0.24\textwidth}
    \centering
    \includegraphics[width=\textwidth]{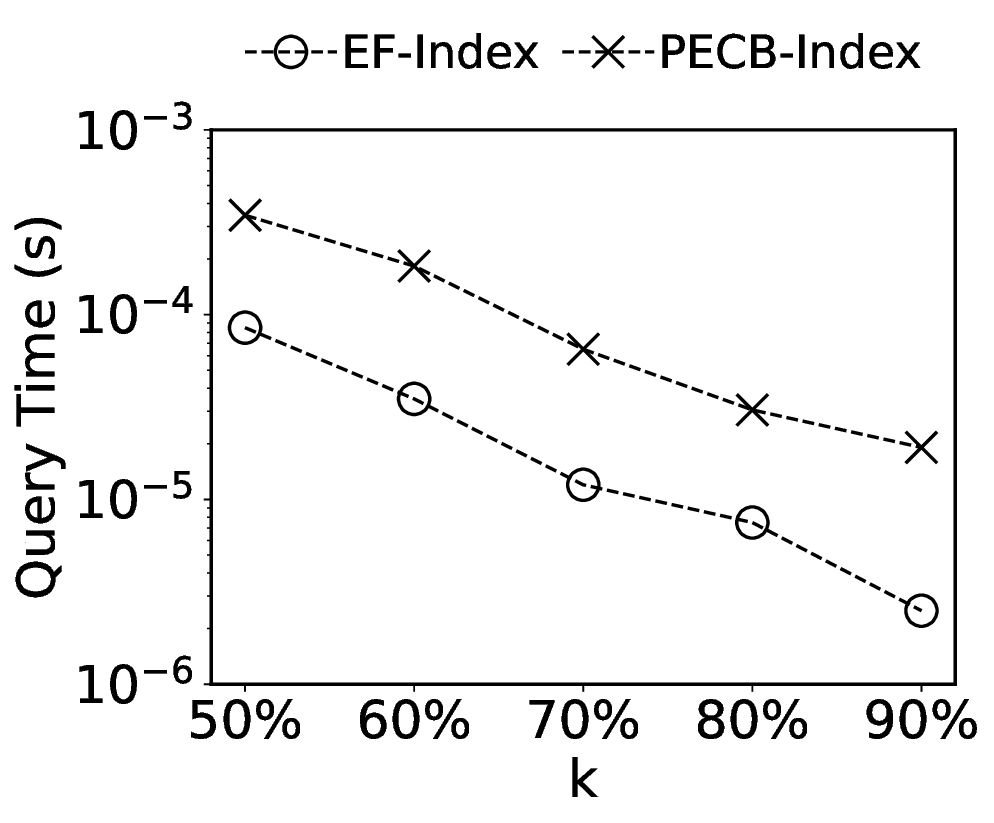}
    \caption{YouTube}
    \label{fig:yt_vary_k_query}
  \end{subfigure}
  \vspace{-1em}
  \caption{Average query time on datasets with timestamps grouped by days, varying \(k\) between 50\%, 60\%, 70\%, 80\%, and 90\% of \(k_{\max}\).}
  \label{fig:vary_k_query}
  \vspace{-0.5em}
\end{figure*}

\subsection{Index Storage Space}
\reffig{space_day} shows the space required by each index when timestamps are aggregated by day. Across every dataset, the PECB-Index uses between one and three orders of magnitude less space than the EF-Index. On the dataset with the most distinct days (PL), EF-Index occupies about 300 megabytes, CTMSF-Index uses roughly 40 megabytes, and PECB-Index remains under 20 megabytes—fifteen times smaller than EF-Index and twice as small as CTMSF-Index. On WK and MF, PECB-Index again shrinks EF-Index’s many-megabyte footprint down to single-digit megabytes and halves the size of CTMSF-Index. Even on graphs with very few days, such as FB, CM, or MC—where EF-Index already requires only a few tenths of a megabyte—PECB-Index still cuts space use in half compared to both baselines. Compared to CTMSF-Index, PECB-Index consistently uses two to four times less space by avoiding the long, frequently updated neighbor lists required at high-degree vertices: its edge-centric binary forest caps each node at two children and records neighbor changes only when they occur, bounding per-node storage without sacrificing correctness.

\subsection{Index Construction Time}
\reffig{construction_day} plots the index‐construction times under day‐aggregated timestamps. Our two incremental methods—CTMSF‐Index and PECB‐Index—are 1-3 orders of magnitude faster than EF‐Index, because they build a single, evolving structure instead of reconstructing an entire forest for every distinct TTI. On graphs with many distinct timestamps—such as MF, AU, LR, or WT—EF-Index spends hundreds or even thousands of seconds constructing hundreds of daily forests, while our incremental methods finish in just a few seconds, yielding speedups of up to three orders of magnitude. By contrast, on graphs like PL, YT, and DB, which have large edge counts but relatively few distinct days, the cost of EF-Index is lower and the gap narrows to around one order of magnitude. Finally, CTMSF-Index and PECB-Index exhibit almost identical build times, since they share the same core-time MSF construction; any slight advantage of PECB-Index reflects its more compact binary forest representation.

\begin{figure*}[htbp]
\centering
\includegraphics[width=\textwidth]{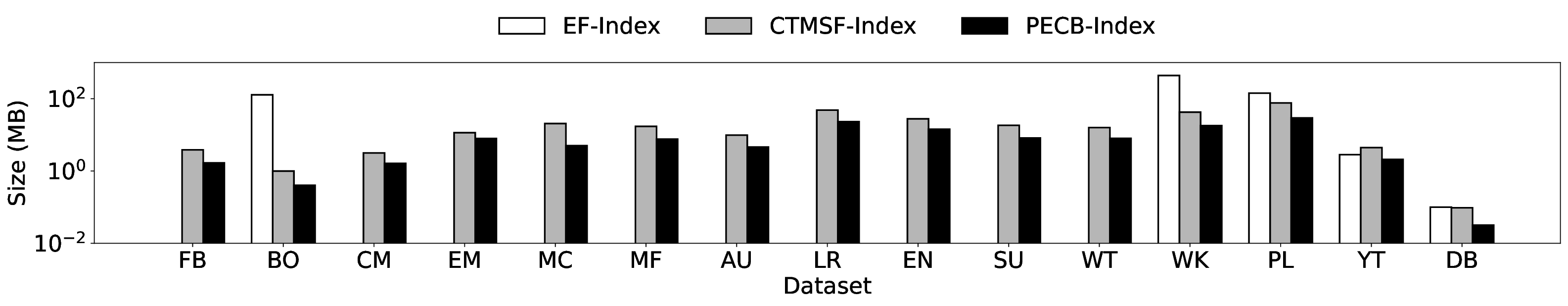}
\vspace{-2em}
\caption{Index space for the original datasets.}
\label{fig:space_org}
\vspace{-0.5em}
\end{figure*}

\begin{figure*}[htbp]
\centering
\includegraphics[width=\textwidth]{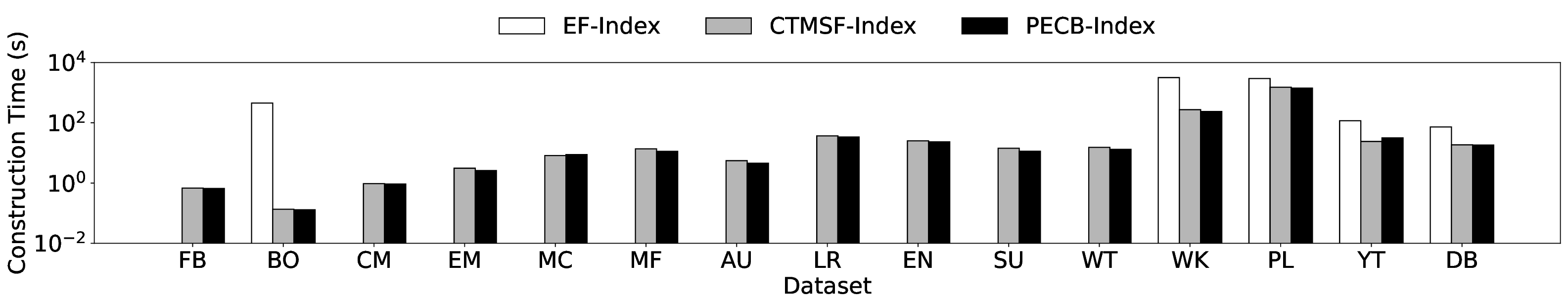}
\vspace{-2em}
\caption{Construction time for the original dataset.}
\label{fig:construction_org}
\vspace{-0.5em}
\end{figure*}

\begin{figure*}[htbp]
\centering
\includegraphics[width=\textwidth]{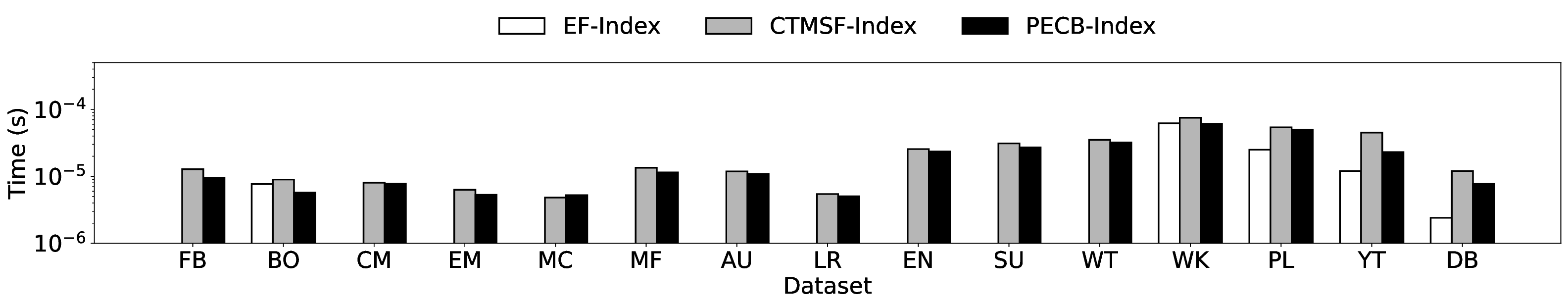}
\vspace{-2em}
\caption{Average query time on the original datasets.}
\label{fig:query_org}
\vspace{-0.5em}
\end{figure*}

\subsection{Query Time}
\reffig{query_day} shows the average query time of EF-Index, CTMSF-Index, and our PECB-Index on the day-aggregated datasets. PECB-Index achieves comparable query performance with EF-Index, with each approach offering distinct advantages. EF-Index maintains a separate evolution forest (MTSF) for every distinct tightest-time-interval (TTI) of a temporal $k$-core; once the correct forest structure is located, running a single BFS is inherently fast. Hence, on datasets with few days (e.g., MC or YT) and therefore few distinct TTIs, it outperforms our approach by roughly 2–5 times. However, as the number of distinct days grows, EF-Index must navigate through many prebuilt forests, and its per-query cost rises accordingly. In contrast, PECB-Index maintains one incrementally stored index and uses a fast binary search at each visited node to reconstruct the correct forest; this adds modest overhead—query times remain in the tens of microseconds—but scales gracefully with more days, often matching or even beating EF-Index on large-day datasets such as MF and EN. Finally, CTMSF-Index and PECB-Index exhibit nearly identical query performance, with PECB-Index enjoying a slight edge thanks to its binary-forest structure (each node has at most two children), which reduces the per-node search cost. Overall, these results confirm that PECB-Index achieves query speeds comparable to the state of the art while delivering vastly lower index size and faster build times.

\subsection{The Impact of $k$}
\reffig{vary_k_space} through \reffig{vary_k_query} explore how varying the cohesiveness parameter $k$ (from 50\% to 90\% of $k_{max}$) affects construction time, index space, and query time. As $k$ increases, every method becomes faster to query: larger values of $k$ yield smaller $k$-cores and therefore fewer nodes to traverse, with this trend most pronounced on coarse-grained datasets like YouTube, where the reduction in visited vertices is greatest. Likewise, index space steadily shrinks as $k$ grows, since only vertices in the $k$-core must be stored; again, YouTube shows the largest relative drop because its daily timestamp count is low. Construction time also falls with higher $k$, reflecting the smaller core-time forests to build; this effect is especially dramatic for EF-Index—which rebuilds a fresh forest for each interval—while PECB-Index benefits more modestly from its incremental, binary-forest updates. Across all $k$ settings, PECB-Index consistently uses less space and builds faster than EF-Index, demonstrating its robustness and efficiency regardless of the chosen cohesiveness level.

\subsection{Performance on Original Timestamps}
\reffig{space_org} through \reffig{query_org} present our experiments on the original, unaggregated timestamps. Here EF-Index cannot complete on most fine-grained datasets due to the sheer number of distinct timestamps. On the only datasets where EF-Index does finish (FB, CM, EM, MC, YT, and DB), both its index size and its construction time are higher than those of PECB-Index by 1-4 orders of magnitude. In contrast, PECB-Index (and CTMSF-Index) cope easily with high timestamp resolution by maintaining a single, incrementally updated structure and lightweight binary-forest operations. On PL, YT, and DB—where the raw timestamp count is actually lower than the day count—their performance matches the day-aggregated case. In short, PECB-Index succeeds on datasets with original timestamps where EF-Index cannot, while still dramatically outperforming it in space and build time.

\section{Conclusion}
\label{sec:conclu}


We introduced an edge-centric binary forest that compactly records core-time changes and preserves all $k$-core components across windows. By storing only incremental differences and capping every node at two children, our method cuts construction time and space by up to three orders of magnitude compared with EF-Index while maintaining micro-second-level query latency. These results demonstrate that temporal $k$-core component search can be supported efficiently without the heavy redundancy of prior approaches.


\clearpage

\bibliographystyle{ACM-Reference-Format}
\bibliography{ref}

\end{document}